\documentclass[a4paper, 11pt]{article}

\usepackage{amssymb, amsmath, amsthm}
\usepackage{fullpage}
\usepackage{graphicx}
\usepackage{url}
\usepackage{comment}
\theoremstyle{plain}

\newtheorem{theorem}{Theorem}[section]
\newtheorem{lemma}[theorem]{Lemma}
\newtheorem{corollary}[theorem]{Corollary}

\newtheorem{example}[theorem]{Example}

\numberwithin{equation}{section}

\theoremstyle{definition}
\newtheorem{definition}[theorem]{Definition}
\newtheorem{dassumption}[theorem]{Dispersion Assumption}
\newtheorem{sdassumption}[theorem]{Strengthened Dispersion Assumption}

\theoremstyle{remark}
\newtheorem{remark}[theorem]{Remark}

\newcommand{\R}{{\mathbb R}}

\newcommand{\Prob}{\mathop{\mathbb{P}}\nolimits}
\newcommand{\E}{\mathop{\mathbb{E}}\nolimits}

\newcommand{\sQ}{{\mathcal Q}}
\newcommand{\sP}{{\mathcal P}}
\newcommand{\sF}{{\mathcal F}}

\newcommand{\sS}{{\mathcal S}}
\newcommand{\sL}{{\mathcal L}}

\newcommand{\sD}{{\mathcal D}}
\newcommand{\sE}{{\mathcal E}}

\newcommand{\sM}{{\mathcal M}}

\newcommand{\tp}{{p_R^{-1}}}
\newcommand{\tq}{{q_L^{-1}}}

\title{Robust price bounds for the forward starting straddle}
\author{David Hobson\thanks{Department of Statistics, University of Warwick, 
Coventry, CV4 7AL, UK, d.hobson@warwick.ac.uk} 
\and 
Martin Klimmek\thanks{Nomura Centre for Mathematical Finance, Mathematical Institute, University of Oxford, \newline Oxford, OX1 3LB, UK, martin.klimmek@maths.ox.ac.uk}}

\date{\today}

\bibliographystyle{plain}

\begin{document}

\maketitle

\begin{abstract}
In this article we consider the problem of giving a robust,
model-independent, lower bound on the price of a forward starting
straddle with payoff $|F_{T_1} - F_{T_0}|$ where $0<T_0<T_1$. Rather than
assuming a model for the underlying forward price $(F_t)_{t \geq 0}$, we
assume that call prices for maturities $T_0<T_1$ are given and
hence that the marginal laws of the underlying are known. The primal
problem is to find the model which is consistent with the observed call
prices, and for which the price of the forward starting straddle is
minimised. The dual problem is to find the cheapest semi-static subhedge.

Under an assumption on the supports of the marginal laws, but no 
assumption that the laws are atom-free or in any other way regular, we 
derive explicit expressions for the coupling which minimises the price of 
the option, and the form of the semi-static subhedge.

\end{abstract}

\section{Introduction} 

We consider the problem of constructing martingales with two given marginals 
which minimise the expected value of the modulus of the increment. This 
problem has a direct correspondence to the problem 
in mathematical finance of giving a no-arbitrage lower bound on the price of 
an at-the-money forward starting straddle, given today's vanilla call prices 
at the two relevant maturities. There is also a related dual problem, which is 
to construct the most expensive semi-static hedging strategy which 
sub-replicates the payoff of the forward starting straddle for any realised 
path of the underlying forward price process. Under a certain assumption on 
the distributions we solve the primal and dual 
problem and demonstrate that there is no duality gap.

The results of this article complement previous results by Hobson and 
Neuberger \cite{HobsonNeuberger:2008}. In that article, the authors solved the 
analogous problem of constructing no-arbitrage upper price bounds and 
semi-static 
super-replicating hedging strategies for the forward starting straddle. Hobson 
and Neuberger \cite{HobsonNeuberger:2008} give some examples, but the main 
contribution is an existence result, and a proof that the primal and dual 
problems yield equal values. 

Returning to the lower bound case, it follows from results of Beiglb{\"o}ck et 
al~\cite{BeiglboeckHLP:2011} that there exists a solution to the primal 
problem 
and that there is no duality gap. However, Beiglb{\"o}ck et 
al~\cite{BeiglboeckHLP:2011} give an example to show that the dual supremum may 
not be attained. 
Our contribution is to show
that, under a critical but natural condition on the starting and terminal law,
there is an explicit construction of all the quantities of interest. (A
parallel construction gives an explicit form for the optimal martingale in
the upper bound setting, but in that case the condition on the measures is
less natural.) 

There has been a recent resurgence of interest in problems of this type, in 
part motivated by the connection with finance, see the survey article by 
Hobson~\cite{Hobson:2011}, and in part motivated by the connections with the 
optimal
transport problem, see \cite{BeiglboeckHLP:2011, BeiglboeckJuillet:2012,
HLTouzi:2013}. A first strand of the mathematical finance literature is 
concerned with 
robust pricing of particular exotic derivatives, for example lookback 
options~\cite{Hobson:1998, GalichonHLPTouzi:2013}, barrier 
options~\cite{BrownHobsonRogers:2001a, BrownHobsonRogers:2001b, CoxWang:2013},
Asian options~\cite{DDGKV:2002}, basket 
options~\cite{HobsonLaurenceWang:2005,d'AspremontElGhaoui:2006} and volatility 
swaps~\cite{CarrLee:2010,Kahale:2009, HobsonKlimmek:2011}. This strand of the 
literature often makes use of connections with the Skorokhod embedding 
problem, but in some senses the problem here is simpler in that the option 
payoff only depends on the joint law of $(F_{T_0},F_{T_1})$ and is otherwise 
path-independent. For this reason the full machinery of the Skorokhod 
embedding 
problem is not needed, although it can still help with the intuition. A second 
strand of the mathematical finance literature on robust 
pricing~\cite{Cousot:2007, 
DavisHobson:2007,ABPS:2013,DolinskySoner:2013} considers consistency between 
options and 
no-arbitrage conditions.

The optimal transport literature (see Villani~\cite{Villani:2009}) is 
concerned with the cheapest way to transport `sand' distributed according to a 
source measure $\mu$ to a destination measure $\nu$. Such problems are also 
motivated by economic questions and the transport of mass becomes an issue of 
the optimal allocation of economic goods. Unsurprisingly, questions of this 
type were of particular importance to mathematicians working on questions of 
efficiency in planned economies, see for instance 
Kantorovich~\cite{Kantorovich:1975}. The Kantorovich relaxation of Monge's 
original problem goes back to his use of linear programming methods and as 
such, the Lagrangian approach taken in this paper is most natural. With 
respect to the classical optimal transport literature, the novelty of the 
current problem, as elucidated in Beiglb{\"o}ck et 
al~\cite{BeiglboeckHLP:2011} and developed in Beiglb{\"o}ck and Juillet 
\cite{BeiglboeckJuillet:2012} and Henry-Labord\`{e}re and 
Touzi~\cite{HLTouzi:2013} is to add a martingale requirement to the transport 
plan, which is motivated by the idea that no-arbitrage considerations equate 
to a condition that forward prices are martingales under a pricing measure. 
Both \cite{BeiglboeckJuillet:2012} and \cite{HLTouzi:2013} consider the 
problem of minimising the martingale transport cost for a class of cost 
functionals , but the payoff $|y-x|$ is not a member of this class. In 
contrast, here we focus exclusively on the payoff $|y-x|$. This functional 
encapsulates the payoff of a forward starting straddle, which is an important 
and simple financial product which fits into the general framework, and it is 
the original Monge cost function in the classical set-up. For these two 
reasons this cost functional is of significant interest.

\section{Motivation and preliminaries} 

Let $X,Y$ be real-valued random variables 
and suppose $X \sim \mu$ and $Y 
\sim \nu$. We say that the bivariate law $\rho$ is a martingale coupling 
(equivalently a martingale transference plan) 
of $X$ 
and $Y$, and write $\rho \in {\sM}(\mu,\nu)$, if $\rho$ has marginals $\mu$ 
and 
$\nu$ and is such that $\int_y (y - x) \rho( dx, dy) = 0$ for each $x$.
For a univariate measure $\zeta$ define $C_\zeta$ via $C_\zeta(x) = \int
(y-x)^+ \zeta(dy)$. It is well known (see, for instance, 
Strassen\cite{Strassen:67})
that ${\sM}(\mu,\nu)$ is non-empty if and 
only if $\mu$ is less than or equal to $\nu$ in convex order, or 
equivalently $\mu$ and $\nu$ have equal means and $C_{\mu}(x) \leq 
C_{\nu}(x)$ for all $x \in \R$. Then, under the assumption that 
${\sM}(\mu,\nu)$ is non-empty, 
our goal is to find
\begin{equation}
\label{eqn:primal}
{\sP}(\mu,\nu) := \inf_{\rho \in {\sM}(\mu,\nu)} \E[|Y-X|] \equiv  
\inf_{\rho \in 
{\sM}(\mu,\nu)} \int |y-x| \rho(dx,dy).  
\end{equation}

The financial significance of this result is as follows. Let $F=(F_t)_{t \geq 
0}$ denote the forward price process of a financial asset.
Let $T_0$ and 
$T_1$ be two future times with $0<T_0 < T_1$. A well known argument due to 
Breeden and Litzenberger \cite{BreedenLitzenberger:78}, shows that knowledge 
of a continuum of call prices for a fixed expiry is equivalent to knowledge of 
the marginal law at that time. Suppose then that a continuum in strike of call 
prices are available from a financial market and hence that it is possible to 
infer the marginal laws of $F$ at time $T_0$ and time $T_1$. Suppose these 
laws are given by $\mu$ and $\nu$. The problem is to minimise 
$\E[|F_{T_1}-F_{T_0}|]$ over all martingale models for $F$ with the given 
marginals, i.e. to find (\ref{eqn:primal}). We will call this problem 
the {\it primal problem}.

Conversely, suppose we can 
construct a trio of functions $(\psi_0, \psi_1,\delta)$ such that
\begin{equation} \label{eq:hedge}
|y-x| \geq \psi_1(y)-\psi_0(x) + \delta(x)(x-y) \hspace{10mm} \forall 
x,y \in 
\R.
\end{equation}
Then $\E[|Y-X|] \geq \E[\psi_1(Y)] - 
\E[\psi_0(X)] 
=
\int \psi_1(y)\nu(dy) - \int \psi_0(x) \mu(dx)$. It follows from 
(\ref{eq:hedge}) that if $\tilde{\sS}$ is the 
set of trios of functions $(\psi_0, \psi_1,\delta)$ such that 
(\ref{eq:hedge}) 
holds 
then
\begin{equation}
\label{eqn:dual}
 \E[|Y-X|] \geq \sup_{ (\psi_0, \psi_1 ,\delta)  \in \tilde{\sS} }   
\left\{ \int \psi_1(y)\nu(dy) - \int \psi_0(x) \mu(dx) \right\} =: 
{\tilde{\sD}}(\mu,\nu).
\end{equation}
We will call the problem of finding the right-hand-side of (\ref{eqn:dual}) 
the {\it dual problem}. 

Again there is a direct financial interpretation of (\ref{eqn:dual}).
Under our assumption that a continuum of calls is traded, the European 
contingent claims $\psi_1(F_{T_1})$ and $\psi_0(F_{T_0})$ can be 
replicated with portfolios of call options bought and sold at time zero.
Moreover, if the agent sells forward $\delta(F_{T_0})$ units over
$[T_0,T_1]$ then the gains are $\delta(F_{T_0}) (F_{T_0} - F_{T_1})$. 
Combining the two-elements of the semi-static strategy 
(\cite[Section 2.6]{Hobson:2011}) consisting of calls and a simple 
forward position 
yields
\[ \psi_1(F_{T_1}) - \psi_0(F_{T_0}) + \delta(F_{T_0}) (F_{T_0} - 
F_{T_1}) \]
which corresponds to the right-hand-side of (\ref{eq:hedge}).
If (\ref{eq:hedge}) holds then the semi-static strategy is a subhedge 
for the payoff $|F_{T_1}- F_{T_0}|$.

It is clear that if (\ref{eq:hedge}) holds then we must have $\psi_1(x) 
\leq \psi_0(x)$ and that if we want to find $\psi_i$ to maximise the 
right-hand-side of (\ref{eqn:dual}) then we want $\psi_0$ as small as 
possible. As a result a natural candidate for optimality is 
to take $\psi_0=\psi_1$ and the 
problem of finding a trio $(\psi_0, \psi_1, \delta)$ reduces to finding 
a pair $(\psi,\delta)$. We let $\sS$ be the set of pairs $(\psi,\delta)$ 
such that $|y-x| \geq \psi(y)-\psi(x) + \delta(x)(y-x)$ for all $x,y$. 
Define
\begin{equation}
\label{eq:defdual}
\sD(\mu,\nu) = \sup_{\psi,\delta \in \sS} \left\{ \int \psi(y)\nu(dy) - \int 
\psi(x) \mu(dx) \right\} .
\end{equation}

Weak duality gives that $\sP(\mu,\nu) \geq {\tilde{\sD}}(\mu,\nu) \geq 
\sD(\mu,\nu)$. 
Note that there can be no uniqueness of the dual optimiser: if $(\psi,\delta) 
\in \sS$ then so 
is $(\psi + ax +b, \delta - a)$. Hence we may choose any convenient 
normalisation such as $\psi(x_0)=0=\delta(x_0)$ for some $x_0 \in \R$.

In this article, in addition to requiring that $\mu$ and $\nu$ are increasing 
in convex order we will make the following 
additional assumption. 

\begin{dassumption} \label{ass:disjoint} The marginal distributions
$\mu$ and $\nu$ are such that
the support of $\eta : =(\mu-\nu)^+$ is contained in an interval $E$ 
and the support of 
$\gamma := (\nu-\mu)^+$ is contained in $E^c$.
\end{dassumption}

Weak duality gives that $\sP(\mu,\nu) \geq {\tilde{\sD}}(\mu,\nu) \geq
\sD(\mu,\nu)$.
Our goal in this article is to show that $\sP(\mu,\nu) = \sD(\mu,\nu)$ 
and to give explict expressions for the optimisers $\rho$, $\psi$ and 
$\delta$.

\begin{remark}
Note that we make no other regularity assumptions on the
measures $\mu$ and
$\nu$.
For example we do not require that $\mu$ and $\nu$ have densities.
In contrast, Henry-Labord\`{e}re and
Touzi~\cite{HLTouzi:2013} assume that $\mu$ has no atoms. This is 
also a 
simplifying assumption in part of Beiglb{\"o}ck and Juillet
\cite{BeiglboeckJuillet:2012}. Conversely, the example in
Beiglb{\"o}ck et al~\cite{BeiglboeckHLP:2011} which 
shows that the dual optimiser may not exist makes essential use of the 
fact 
that $\mu$ has atoms. In our case, under Assumption~\ref{ass:disjoint} we show 
that a dual maximiser exists 
whether or not $\mu$ has atoms.
\end{remark}

Note that if $\mu$ is less than or equal to $\nu$ in convex order and 
Assumption~\ref{ass:disjoint} holds then $E$ must be a finite interval. $E$ 
may be closed, or open, or half-open. The rationale for 
Assumption~\ref{ass:disjoint} will become apparent in the development of the 
results below. Let us, however, briefly point out that the assumption is 
natural in contexts most commonly encountered in mathematical finance. For 
instance, if the two distributions are increasing in convex order and 
log-normal, then Assumption~\ref{ass:disjoint} is trivially satisfied.

\begin{example} Let $Z$ be a random variable which is symmetric about zero and 
which has density $f_Z$ such that $zf_Z(z)$ is unimodal on $\R_+$. For 
$s<t$ let $X \equiv sZ$ and $Y \equiv tZ$. Let $\mu$ and $\nu$ be the laws of 
$X$ and $Y$. Then Assumption~\ref{ass:disjoint} holds. 
\end{example}

\subsection{Heuristics and motivation for the structure of the solution}

Let $B=(B_t)_{t \geq 0}$ be a standard Brownian motion and consider the 
problem of maximising or minimising $\E[|B_\tau|]$ over all stopping 
times $\tau$, subject to the constraint that $\E[B_\tau^2]=\E[\tau]=1$. 
For the maximum, the solution is a two point distribution at $\pm 1$. 
This is consistent with the solution derived in Hobson and Neuberger
\cite{HobsonNeuberger:2008} for the forward starting straddle, where in 
the atom-free case the solution to the problem of maximising $\E[|Y-X|]$ 
subject to $X \sim \mu$, $Y \sim \nu$ and $\E[Y | X]=X$ is characterised 
by a pair of increasing functions $f,g$ with $f(x)<x<g(x)$, such that, 
conditional on the initial value of the martingale being $x_0$, the 
terminal value of the martingale lies in $\{f(x_0), g(x_0) \}$. 
Unfortunately, the condition that $f$ and $g$ are increasing is not 
sufficient to guarantee optimality and a further `global consistency 
condition' (\cite[p42]{HobsonNeuberger:2008}) is required.

A solution to the problem of minimising $\E[|B_\tau|]$ over 
stopping times such that $\E[B_\tau^2]=1$ does not exist. However, 
$\E[|B_\tau|]$ can be made small by placing some mass at $\pm n$ and a 
majority of the mast at $0$. For instance, placing an atom of size 
$\frac{1}{2n^2}$ at $\pm n$ and the remaining mass $1-1/2n^2$ at $0$, we have 
$\E[|B_\tau|]=1/n$. The intuition which carries over into the minimisation 
problem for the forward starting straddle is therefore to move as little mass 
as possible. In other words, we expect that subject to the same constraints as 
above, $\E[|Y-X|]$ is minimised if $Y \in \{p(X),X,q(X)\}$ for 
functions $p,q$ with $p(x) \leq x \leq q(x)$.

\subsection{Intuition for the dual problem: the Lagrangian formulation}
\label{ssec:intuition}

As in \cite{HobsonNeuberger:2008} we take a Lagrangian approach. The problem 
is 
to minimise
\( 
\int \int |y-x| \rho(dx,dy),
\) 
subject to the martingale and marginal conditions
$\int_x \rho(dx,dy)=\nu(dy)$, $\int_y \rho(dx,dy)=\mu(dx)$,
and 
$\int_y(x-y) \rho(dx,dy) = 0$.

Letting $\alpha(y)$, $\beta(x)$ and $\theta(x)$ denote the multipliers for 
these constraints, define the Lagrangian objective function
$L(x,y) = |y-x|-\alpha(y)-\beta(x)-\theta(x)(x-y)$.
Then the Lagrangian formulation of the primal problem is to minimise over all 
measures $\rho$ on $\R^2$, 
\begin{equation} \label{eq:lagrange}
\int \int \rho(dx,dy) L(x,y) +\int \alpha(y)\nu(dy) + \int \beta(x) \nu(dx).
\end{equation}
For a finite optimum we require $L(x,y) \geq 0$ and we expect that $L(x,y)=0$ 
for 
$y \in \{p(x),x,q(x)\}$, for a pair of functions $(p,q)$, to be determined, 
see Figure~\ref{fig:1}. In 
particular, since $L(x,x)=0$, we expect that $\beta=-\alpha$.
In this case we write $L=L^{\alpha,\theta}$ where
\begin{equation} 
\label{eqn:Ldef}
L^{\alpha,\theta}(x,y) = |y-x| - \alpha(y) + \alpha(x) - 
\theta(x)(x-y) 
\end{equation}

\begin{figure}[ht!]\label{fig:1}
\begin{center}
\includegraphics[height=6cm,width=10cm]{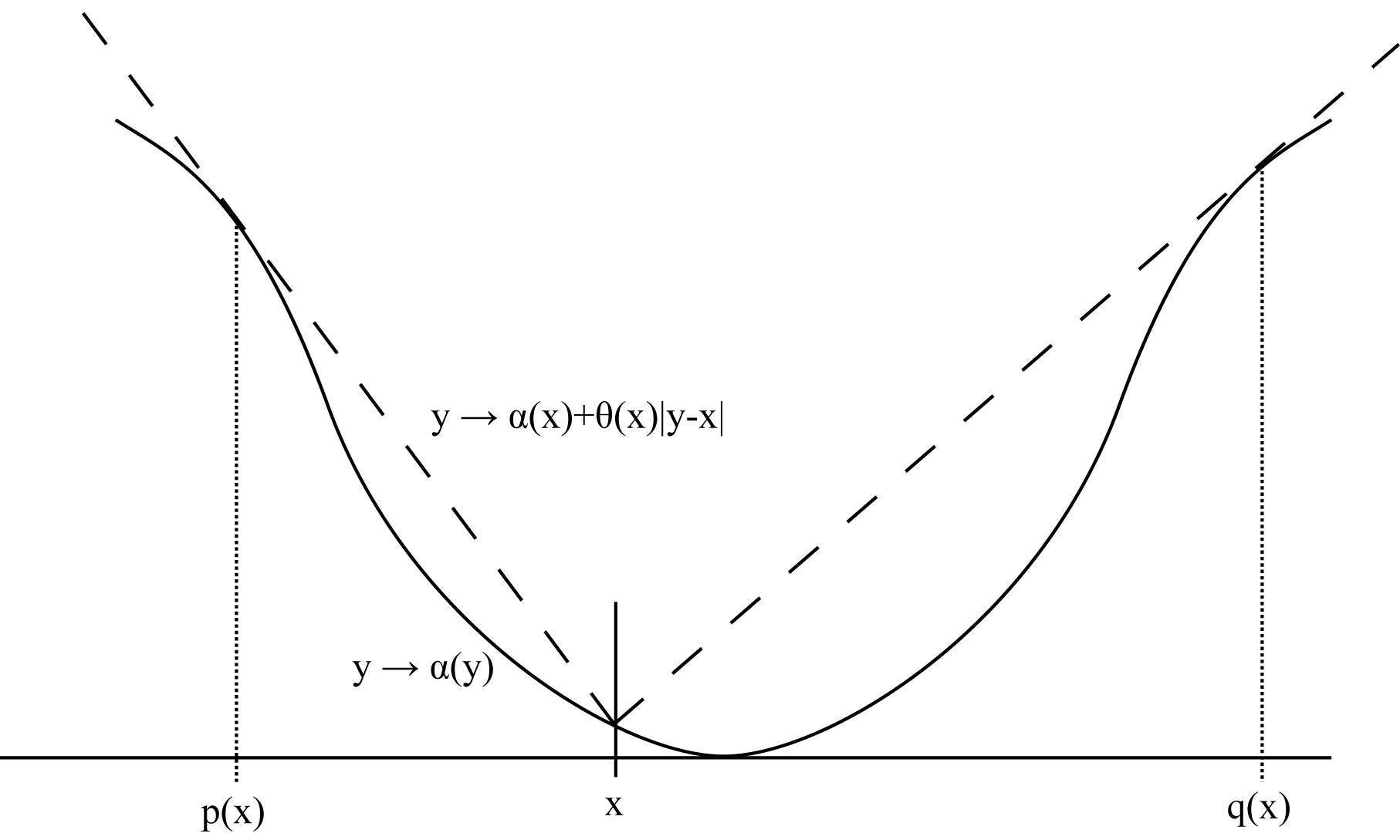}
\end{center}
\caption{We have that $\alpha(x) + \theta(x)|y-x| \geq \alpha(y)$ with
equality for $y \in \{p(x),x,q(x) \}$.}
\end{figure}

Assuming that $\alpha$, $\theta$, $p$ and $q$ are suitably differentiable 
we can
derive expressions for $\alpha$ and $\theta$. 
The conditions $L(x,p(x))=0$, $L_y(x,p(x))=0$, $L(x,q(x))=0$ and 
$L_y(x,q(x))=0$ 
lead directly to the following equations for $x \in E$:
\begin{eqnarray} \label{eq:Lequations}
x-p(x)+\alpha(x)-\alpha(p(x))-\theta(x)(x-p(x))&=&0 \label{eq:Le1} \\
-1-\alpha'(p(x))+\theta(x) &=& 0 \label{eq:Le2} \\
q(x)-x+\alpha(x)-\alpha(q(x))-\theta(x)(x-q(x))&=&0 \label{eq:Le3} \\
1-\alpha'(q(x))+\theta(x) &=& 0 \label{eq:Le4}
\end{eqnarray}
Differentiating (\ref{eq:Le1}) and using (\ref{eq:Le2}), we obtain 
\begin{equation} \label{eq:Le5}
1+\alpha'(x)-\theta(x)-\theta'(x)(x-p(x))=0.
\end{equation}
Similarly, using (\ref{eq:Le3}) and (\ref{eq:Le4}), we obtain 
\begin{equation} \label{eq:Le6}
-1+\alpha'(x)-\theta(x)-\theta'(x)(x-q(x))=0.
\end{equation}
Subtracting the second of these two equations from the first, it follows that 
$\theta'(x)=\frac{2}{q(x)-p(x)}$. Hence $\theta$ is increasing and for 
some $x_0 \in E$, 
\[\theta(x)=\theta(x_0)+\int_{x_0}^x \frac{2}{q(z)-p(z)}dz.\]
Now by adding (\ref{eq:Le5}) and (\ref{eq:Le6}) we have  
$\alpha'(x)=\theta(x)+\frac{\theta'(x)}{2}(2x-p(x)-q(x))$. Then 
\[\alpha(x)=x \theta(x) -\int_{x_0}^x 
\frac{q(z)+p(z)}{q(z)-p(z)}dz + \alpha(x_0)-x_0 \theta(x_0).\] 
Moreover, from the fact that $q(x)$ is a local minimum in $y$ of $L(x,y)$ we 
expect $0 \leq L_{yy}(x,q(x)) = - \alpha''(q(x))$. Hence $\alpha$ is concave at 
points $y = q(x)$.

Fix $x' < x''$.
We must have that for $y \geq x'$, $\alpha(y) \leq \alpha(x') + (y-x')(1+ 
\theta(x'))$, with equality at $y=q(x')$. We also know that for $y \geq x''$,
$\alpha(y) \leq \alpha(x'') + (y-x'')(1+ 
\theta(x''))$, with equality at $y=q(x'')$.
Suppose that $x'' \leq q(x')$. Then
\[ \alpha(x') + (q(x')-x')(1+
\theta(x')) = \alpha(q(x')) \leq \alpha(x'') + (q(x') -x'')(1+  
\theta(x'')). \] 
Similarly, since $x' < x'' \leq q(x'')$, we have
\[ \alpha(x'') + (q(x'')-x'')(1+
\theta(x'')) = \alpha(q(x'')) \leq \alpha(x') + (q(x'') -x')(1+
\theta(x')). \]
Then adding and simplifying we find
$(q(x') - q(x''))(\theta(x'')-\theta(x')) \geq 0$. But $\theta$ 
is increasing and thus 
$q(x') \geq q(x'')$.
Hence $q$ is 
locally decreasing, in the sense that if $x'<x''$ then either 
$q(x') < x''$ or $q(x') \geq q(x'')$.

Now suppose we add Assumption~\ref{ass:disjoint}. The 
pair $(x,q(x))$ must be such that $x \in E, q(x) \in E^c$. Then, if $x' 
\leq x'' \in E$, we cannot have $q(x') \leq x''$. Thus $q(x') \geq 
q(x'')$ and $q$ is necessarily globally decreasing. Similarly $p$ is 
globally decreasing.

\section{Determining $(p,q)$}

\subsection{A differential equation for $p$ and $q$.}
\label{ssec:de}

Suppose that Assumption~\ref{ass:disjoint} is in force.
Suppose that $E$ has end-points 
$\{a,b\}$, either of which may or may not belong to $E$. Then $p:E \mapsto 
(-\infty,a]$ and $q:E \mapsto 
[b,-\infty)$ are decreasing functions. 

Recall the definitions
$\eta : =(\mu-\nu)^+$ and 
$\gamma := (\nu-\mu)^+$. Our martingale transport of $\mu$ to $\nu$ 
involves leaving common mass $(\mu \wedge \nu)$ unchanged, and otherwise 
mapping $\eta$ to $\gamma$.

Suppose that $\eta$ and $\gamma$ have densities $f_\eta$ and $f_\gamma$. 
Then we expect that $p$ and $q$ are strictly decreasing. Then $(X \neq 
Y, X \leq 
z) = (p(z) < Y \leq a) \cup (q(z) < Y)$. It follows that for $z \in E$ we have 
both
\begin{equation} \label{eq:marginalconstraint}
\int_a^z f_\eta(u) du =\int_{q(z)}^\infty f_\gamma(u) du +\int_{p(z)}^a 
f_\gamma(u) du,
\end{equation}
and, from the martingale property,
\begin{equation} \label{eq:martingaleconstraint}
\int_a^z u f_\eta(u) du = \int_{q(z)}^\infty u 
f_\gamma(u) du +\int_{p(z)}^a u f_\gamma(u) du.
\end{equation}

Then by differentiating equations (\ref{eq:marginalconstraint}) and
(\ref{eq:martingaleconstraint})
we can derive a coupled pair of differential equations for the pair 
$(p,q)$:
\begin{eqnarray}
p'(x) &=& - \frac{q(x)-x}{q(x)-p(x)}
\frac{f_\eta(x)}{f_\gamma(p(x))} \equiv
 \frac{q(x)-x}{q(x)-p(x)} 
\frac{f_\mu(x)-f_\nu(x)}{f_\mu(p(x))-f_\nu(p(x))}, \label{eq:downdiff2} \\
q'(x) &=& 
- \frac{x-p(x)}{q(x)-p(x)}
\frac{f_\eta(x)}{f_\gamma(q(x))} \equiv 
\frac{x-p(x)}{q(x)-p(x)} 
\frac{f_\mu(x)-f_\nu(x)}{f_\mu(q(x))-f_\nu(q(x))} \label{eq:updiff2}. 
\end{eqnarray}
where $f_\mu$ and $f_\nu$ are the densities of $\mu$ and $\nu$ which we 
also assume to exist. Moreover we expect the boundary conditions
$p(a)= \sup \{ x : F_\eta(x) < F_\eta(a) \}$ and $q(a)= \sup \{ x : F_\eta(x) 
< 1 \}$ (and similarly $p(b)= \inf \{ x : F_\eta(x) >0 \}$ and $q(b)= 
\inf \{ x : F_\eta(x)  
> F_\eta(a) \}$).

\begin{example} \label{ex:uniformdiff}
Suppose $\mu \sim U[-1,1]$ and $\nu \sim U[-2,2]$. We obtain 
$p'(x)=\frac{x-q(x)}{q(x)-p(x)}$ and $q'(x)=\frac{p(x)-x}{q(x)-p(x)}$. 
Let 
$h(x)=q(x)-p(x)$ and $j(x)=q(x)+p(x)$. 
Note that 
$j'(x)=q'(x)+q'(-x)=\frac{p(x)-x}{h(x)}+\frac{x-q(x)}{h(x)}=-1$. By 
symmetry we expect that $p(x)=-q(-x)$ and then 
$j(0)=0$ so that $j(x)=-x$. To derive $h$, note that
$q'(x) h(x) =p(x)-x$ and so
$(h'(x)+j'(x))h(x) =j(x)-h(x) - 2x$.
Then $h'(x)h(x) = j(x)-2x = - 3x$
so that 
$h(x)^2=A-3x^2$ for some constant $A>0$. It follows that 
$q(x)=(h(x)+j(x))/2 =\sqrt{q(0)^2- \frac{3}{4}x^2} - x/2$. 
The boundary condition $q(-1)=2$ then gives $q(0)=\sqrt{3}$ and thus
$q(x)=\frac{\sqrt{12-3x^2}-x}{2}$. Then also a second boundary $q(1)=1$ 
is satisfied. Further, $p(x)= \frac{-\sqrt{12-3x^2}-x}{2}$.
\end{example}

\begin{example} \label{ex:moduniform}
Consider a slight modification of the above such that $\mu \sim U[-1,1]$ and 
$\nu \sim \frac{5}{8}U[-2,-1] + \frac{3}{8}U[1,4]$. Then, for $x \in (-1,1)$,
\( p'(x) = - \frac{4}{5} \left( \frac{q(x) - x}{q(x)-p(x)} \right) \)
and
\( q'(x) =  - 4 \left( \frac{x-p(x)}{q(x)-p(x)} \right) \),
with boundary conditions $p(-1) = -1$ and $q(-1)=4$. This pair of coupled 
equations looks hard to solve. Nonetheless we will show using the 
methods of subsequent sections that
\begin{equation}
 p(x) = - \frac{1}{6} \left( 5 + 4x + \sqrt{25-8x-8x^2} 
\right);
\hspace{10mm}
q(x) = - \frac{1}{6} \left( 5 + 4x - 5\sqrt{25-8x-8x^2} \right).
\label{ex:moduniformpq}
\end{equation}
\end{example}

Except when it is possible to find special simplifications it appears 
difficult to solve the coupled differential equations for $p$ and $q$.
Indeed when the support of $\nu$ is unbounded, some care may be needed over 
the behaviour of $p$ and $q$ at the boundaries of $E$.
More generally, if $\mu$ and $\nu$ have atoms then we cannot expect $p$ and 
$q$ to be differentiable, or even continuous. 
Instead we will take an alternative approach via the 
potentials of 
$\mu$ and $\nu$, which leads to a unique characterisation of $(p,q)$.

\section{Deriving the subhedge} 

Motivated by the intuition of the 
previous section let $\sQ$ be the set of pairs of decreasing functions
($p:E \mapsto (-\infty,a],q:E \mapsto [b,\infty))$. 
Using the monotonicity of $p$ and $q$ we can 
extend the domains of $p$ and $q$ to $[a,b]$. Note that we do not assume 
that $p$ and $q$ are 
differentiable or even continuous, or that they are strictly monotonic. 
Thus, in general $p^{-1}$ and $q^{-1}$ are set valued. Nonetheless we can 
define
$\tp$ and $\tq$ to be the right-continuous and left-continuous 
inverses of $p$ and $q$ respectively. Then $\tp(y) = \inf \{ x \in E ; 
p(x) \leq y \}$ with the convention that $\tp(y) = b$ if this set is 
empty. Similarly $\tq(y) = \sup \{ x \in E ;  
q(x) \geq y \}$ and $\tq(y) = a$ if $q(a)<y$.

\begin{definition} 
\label{def:thetaalphadeltapsi}
Fix $x_0 \in [a,b]$. For $(p,q) \in \sQ$ define $\theta = 
\theta_{p,q}:[a,b] \rightarrow 
\R$ and $\alpha = \alpha_{p,q}:[a,b] \rightarrow \R$ via
\begin{eqnarray*}
\theta(x) & = & \int_{x_0}^x \frac{2 dz}{q(z)-p(z)}, \\
\alpha(x) & = & \int_{x_0}^x 
\frac{2x-q(z)-p(z)}{q(z)-p(z)} dz = x \theta(x) - \int_{x_0}^x
\frac{q(z)+p(z)}{q(z)-p(z)} dz .
\end{eqnarray*}
Extend these definitions to $\R$ by defining $\delta = 
\delta_{p,q}: \R \rightarrow
\R$ and $\psi = \psi_{p,q}:[a,b] \rightarrow \R$ via
\begin{eqnarray}
\delta(x) & = & \left\{\begin{array}{ll}
\delta(\tp(x)) \hspace{5mm} \; & \; x< a,  \\
\theta(x) & \;  x \in [a,b],  \\
\delta(\tp(x))  &\; x>b.
\end{array}\right.  \label{eq:deltadef} \\
\psi(x) & = & \left\{\begin{array}{ll}
\alpha(\tp(x))+(\tp(x)-x)(1-\theta(\tp(x)))  &\; x <a, 
\\
\alpha(x) &\; x \in [a,b],  \\
\alpha(\tq(x))+(\tq(x)-x)(-1-\theta(\tq(x)))  &\; x > b.
\end{array}\right.
\label{eq:psidef2}
\end{eqnarray}
\end{definition}

\begin{lemma}
For $x>b$ (respectively $x<a$) the value of $\psi(x)$ does not depend on the 
choice $\tq \in q^{-1}(x)$ (respectively $\tp \in p^{-1}(x)$).
\end{lemma}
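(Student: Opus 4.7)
The plan is to reduce the claim to verifying that the formula defining $\psi$ outside $[a,b]$ is constant on every flat piece of $q$ (respectively $p$), which amounts to a short algebraic cancellation between the integrals defining $\alpha$ and $\theta$.

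Since $q\colon[a,b]\to[b,\infty)$ is decreasing, for fixed $x > b$ the preimage $q^{-1}(x)$ is a (possibly degenerate) sub-interval of $[a,b]$ on which $q\equiv x$. If this interval is a singleton there is nothing to show; otherwise, I would pick $u_1<u_2$ in $q^{-1}(x)$ and set $G(u):=\alpha(u)+(u-x)(-1-\theta(u))$, aiming to show $G(u_1)=G(u_2)$. Using the second form of $\alpha$ from Definition~\ref{def:thetaalphadeltapsi}, the $u\theta(u)$ contributions to $G(u_2)-G(u_1)$ cancel and leave exactly three terms, namely $x\bigl(\theta(u_2)-\theta(u_1)\bigr)$, $-\int_{u_1}^{u_2}(q(z)+p(z))/(q(z)-p(z))\,dz$ and $-(u_2-u_1)$. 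Because $q\equiv x$ throughout $[u_1,u_2]$, the integrands in the first two terms combine as $(2x-x-p(z))/(x-p(z))=1$, giving total $u_2-u_1$, which cancels the third term. The denominators are automatically positive, since $p(z)\le a < b \le x$ on the flat interval.

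The case $x<a$ proceeds by a completely parallel computation: one considers a flat piece $[u_1,u_2]$ of $p$ with $p\equiv x$, works with $\alpha(u)+(u-x)(1-\theta(u))$ in place of $G$, and the sign flips lead to the identical cancellation. I do not anticipate any real obstacle; conceptually the lemma is a consistency check recording that Definition~\ref{def:thetaalphadeltapsi} was tuned precisely so as to be compatible with the set-valued nature of $q^{-1}$ and $p^{-1}$ whenever the marginals force $p$ or $q$ to be non-strictly monotonic.
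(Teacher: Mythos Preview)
Your proof is correct and is essentially the same as the paper's: both compute the difference $G(u_2)-G(u_1)$ as a single integral over $[u_1,u_2]$ and use $q\equiv x$ there to make it vanish. The only cosmetic difference is that the paper combines the terms into $\int_{u_1}^{u_2}\frac{2(x-q(z))}{q(z)-p(z)}\,dz$, which is manifestly zero, whereas you obtain the equivalent cancellation $(u_2-u_1)-(u_2-u_1)=0$.
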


\begin{proof}
Let $(z_i)_{i = 1,2}$ satisfy $q(z_i)=x$. Then 
\[ 
\alpha(z_2)+(z_2-x)(-1-\theta(z_2)) - 
\{ \alpha(z_1)+(z_1-x)(-1-\theta(z_1)) \}  = \int_{z_1}^{z_2}  
\frac{2(x-q(z))}{q(z)-p(z)} dz. \]
But this expression is zero since $q(z)=x$ on $[z_1,z_2]$.
\end{proof}

\begin{example} \label{ex:unifhedge} 
We calculate $\theta$ and $\alpha$ in the uniform case 
for which we have found $p$ and $q$ in Example \ref{ex:uniformdiff}. 
Setting 
$x_0=0$, we have
\[
\theta(x)  =  \int_0^x \frac{2}{\sqrt{12-3u^2}} du
=  \frac{1}{\sqrt{3}} \int_0^x \frac{du}{\sqrt{1-u^2/4}}
= \frac{2}{\sqrt{3}} \sin^{-1}\left(\frac{x}{2}\right) \]
and
\[
\alpha(x) = x \theta(x) + \int_0^x \frac{u du}{\sqrt{12-3u^2}}   
=  \frac{2x}{\sqrt{3}} \sin^{-1}\left(\frac{x}{2}\right)
+\frac{2-\sqrt{4-x^2}}{\sqrt{3}}.
\]
\end{example}

Recall the definition in (\ref{eqn:Ldef}) of the Lagrangian, now 
extended 
to functions $\psi$ 
and 
$\delta$ defined on $\R$,
\begin{equation}
\label{eqn:Ldefpsidelta}
L^{\psi,\delta}(x,y)=|y-x|+\psi(x)-\psi(y)+\delta(x)(y-x).
\end{equation}
In Theorem~\ref{thm:Lgeq0} below we show that the Lagrangian based on 
(\ref{eqn:Ldefpsidelta}) is non-negative in general and zero 
for $x \in \{ p(x),x,q(x) \}$. It follows that 
the pair
$(\psi,\delta)$ is a semi-static 
subhedge for the forward starting straddle, and hence that
for any pair of random 
variables with $\E[Y|X]=X$ and $X \sim \mu$, $Y \sim \nu$,
\begin{equation}
\label{eq:lbound}
\E[|Y-X|] \geq \int \psi(y) \nu(dy) - \int \psi(x) \mu(dx).
\end{equation}

Further, given a pair of strictly monotonic functions $p:E \mapsto 
(-\infty,a]$ and $q:E \mapsto [b, \infty)$, a probability 
measure $\mu$ 
with 
density 
$f_\mu$ and a sub-probability measure $\eta \leq \mu$ with support 
contained in $E$ and density
$f_\eta$, we can construct a 
pair $(X, \tilde{Y}=Y_{p,q,\eta,\mu})$ via
$\Prob(X \in dx, \tilde{Y} \in dy) = \rho^*(x,y) dx dy$
where 
\begin{equation*} \rho^*(x,y) = \left\{ \begin{array}{ll}
f_\eta(x) \frac{q(x)-x}{q(x)-p(x)} I_{ \{ y= p(x) \} } & y < x, \\
f_\mu(x) - f_\eta(x)   & y = x, \\
f_\eta(x) \frac{x-p(x)}{q(x)-p(x)} I_{ \{ y= q(x) \} } & y > x.
\end{array} \right. 
\end{equation*}

Certainly $X \sim \mu$.
Let $\tilde{\nu} \equiv {\nu}_{p,q,\eta,\mu} \sim \sL(\tilde{Y})$. Then 
$\rho^*(x,y) dx dy \in \sM(\mu, \tilde{\nu})$ and 
\[ \int \int \rho^*(x,y) |y-x| dy dx = \E[|\tilde{Y}-X|] \geq \sP(\mu, 
\tilde{\nu}) \geq \sD(\mu,  \tilde{\nu}) \geq \int \psi_{p,q}(y) 
\tilde{\nu}(dy) - \int \psi_{p,q}(x) \mu(dx) \] 
and there is equality throughout, since $\rho^*$ is only positive when
$L^{\psi_{p,q},\delta_{p,q}}=0$. In particular,
$\sP(\mu, \tilde{\nu}) = \sD(\mu,\tilde{\nu})$ and there is no duality gap.

Suppose $\mu$ and $\eta$ are fixed, and consider varying $p$ and $q$.
Provided it is possible 
to choose 
${p}$ and ${q}$ 
such that
$\tilde{\nu} \equiv {\nu}_{{p},{q},\eta,\mu} \sim \nu$ then we 
have
\[ \sP(\mu,\nu) \leq \int \int \rho^*(x,y) |y-x| dy dx = \E[|Y-X|]
= \int \psi_{{p},{q}}(y) {\nu}(dy) - \int 
\psi_{{p},{q}}(x) 
\mu(dx) \leq \sD(\mu,\nu). \]
Weak duality gives $\sP(\mu,\nu) \geq \sD(\mu,\nu)$, and hence the 
solutions to the 
primal and dual problems are the same.

\begin{remark}
We have that for any $({p},{q}) \in \sQ$
\[ \inf_{ \rho \in \sM(\mu,\nu) } \E[|Y-X|] \geq
\int \psi_{{p},{q}}(y) {\nu}(dy) - \int
\psi_{{p},{q}}(x) \mu(dx) \]

In financial terms this means that any $(p,q) \in \sQ$ can be used to 
generate a semi-static 
subhedge $(\psi_{p,q},\delta_{p,q})$ and hence a lower bound on the price of 
the derivative. This bound is tight if, given $T_0$-call prices are 
consistent with $F_{T_0} \sim \mu$, and $T_1$-call prices are
consistent with $F_{T_1} \sim \nu$, then $\nu \sim {\nu}_{p,q,\eta,\mu}$.
\end{remark} 

In the case where Assumption~\ref{ass:disjoint} holds we 
show in the next section how to choose 
functions $p$ and $q$ such that ${\nu}_{p,q,\eta,\mu} \sim \nu$ using a 
geometric argument. In fact, in cases where 
$\mu$ has atoms, generically there is no pair of functions $(p,q)$ such that
$\nu_{p,q,\eta,\mu} \sim \nu$ and we have to work with multi-valued functions, 
or 
rather to re-parameterise the problem and use some independent randomisation.  
For the rest of this section we 
concentrate on showing that the choice of multipliers in 
Definition~\ref{def:thetaalphadeltapsi} leads to a non-negative Lagrangian.

\begin{theorem}
\label{thm:Lgeq0}
For all $x, y \in \R$, 
\[L^{\psi,\delta}=|y-x|+\psi(x)-\psi(y)-\delta(x)(x-y) \geq 0,\]
with equality for $x \in E$ and $y \in \{ p(x),x,q(x) \}$.
\end{theorem}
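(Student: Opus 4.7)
The strategy is to reduce to the case $x \in E$ and then to derive an explicit integral representation for $L^{\psi,\delta}(x,y)$ that makes non-negativity visible.

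The equality $L^{\psi,\delta}(x,x)=0$ is immediate from \eqref{eqn:Ldefpsidelta}. For $x \in E$ with $q(x)>b$, feeding $y = q(x)$ into the piecewise definition of $\psi$ on $(b,\infty)$ and using $\tq(q(x)) = x$ yields $\psi(q(x)) = \alpha(x) + (q(x)-x)(1+\theta(x))$; substituting this together with $\delta(x)=\theta(x)$ into $L^{\psi,\delta}$ immediately gives $L^{\psi,\delta}(x,q(x))=0$. The equality $L^{\psi,\delta}(x,p(x)) = 0$ is symmetric, using the definition of $\psi$ on $(-\infty,a)$ and $\tp(p(x))=x$. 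This settles the three equality claims.

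For the inequality at general $(x,y)$, I would first reduce to $x \in E$. When $x>b$, set $v=\tq(x)\in E$. Using $\psi(x)-\alpha(v)=(x-v)(1+\theta(v))$ and $\delta(x)=\theta(v)=\delta(v)$, a direct substitution collapses to
\[
L^{\psi,\delta}(x,y) - L^{\psi,\delta}(v,y) = |y-x| - |y-v| + (x-v),
\]
which is non-negative for every $y \in \R$ by a short case analysis on the position of $y$ relative to $v\leq x$ (it equals $2(x-v)$, $2(x-y)$, or $0$ according as $y\leq v$, $v\leq y\leq x$, or $y\geq x$). The case $x<a$ is analogous with $u=\tp(x)$. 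Hence it suffices to treat $x\in E$.

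For $x \in E$ I would establish the explicit formula
\[
L^{\psi,\delta}(x,y) \;=\; \int_x^{\xi(y)} \frac{2(q(z)-y)}{q(z)-p(z)}\,dz \qquad (y\geq x),
\]
with $\xi(y)=y$ when $y\in[x,b]$ and $\xi(y)=\tq(y)$ when $y>b$, and an analogous formula with $y-p(z)$ in place of $q(z)-y$ (and integration from $\tp(y)\!\vee\! y$ to $x$) when $y\leq x$. The derivation writes $\alpha(y)-\alpha(x)$ as an integral of $\alpha'(z)=\theta(z)+(2z-p(z)-q(z))/(q(z)-p(z))$, rearranges $(y-x)(1+\theta(x))$ as $\int_x^y (1+\theta(x))\,dz$, and then eliminates the $\theta(z)-\theta(x)$ term via integration by parts using $\theta'(z)=2/(q(z)-p(z))$; the $\theta$-integrals cancel and the remaining terms collapse to the stated form. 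Non-negativity is then transparent: for $y\in[x,b]$ the integrand is pointwise non-negative since $q(z)\geq b\geq y$; for $y>b$, either $\tq(y)\geq x$ (integrand non-negative on $[x,\tq(y)]$) or $\tq(y)<x$ (integrand non-positive on $[\tq(y),x]$ but the integration proceeds backwards). Equality at $y=q(x)$ corresponds to $\tq(y)=x$, rendering the integration interval trivial.

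The main obstacle is the derivation of the integral representation: it requires tracking several cancellations and an integration by parts, and the analogous computation must be repeated for $y\leq x$ and for $y$ outside $E$. A secondary subtlety is that when $\mu$ or $\nu$ has atoms, $p$ and $q$ need not be strictly monotone, so $\tp$ and $\tq$ must be understood as the generalized inverses from the paragraph preceding Definition~\ref{def:thetaalphadeltapsi}; the structural argument above is unchanged, but the verifications at the equality points $y\in\{p(x),q(x)\}$ and the piecewise-definition boundaries require a little more bookkeeping.
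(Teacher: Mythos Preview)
Your proposal is correct and follows essentially the same approach as the paper: reduce $x\notin[a,b]$ to $x\in[a,b]$ via $v=\tq(x)$ (resp.\ $\tp(x)$), and for $x\in[a,b]$ derive the integral representation $\int_x^{\xi(y)} \tfrac{2(q(z)-y)}{q(z)-p(z)}\,dz$ (resp.\ with $y-p(z)$). The only cosmetic differences are that the paper avoids your integration-by-parts step by directly using the second form $\alpha(x)=x\theta(x)-\int_{x_0}^x \tfrac{q(z)+p(z)}{q(z)-p(z)}\,dz$, and that your reduction formula $|y-x|-|y-v|+(x-v)$ handles all $y$ at once whereas the paper splits on the position of $y$ relative to $\tp(x)$.
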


\begin{proof}
Suppose first that $x \in [a,b]$. If $x \leq y \leq b$, then 
\begin{eqnarray*}
L^{\psi,\delta}(x,y) &=& (y-x)+x \theta(x)-\int_{x_0}^x 
\frac{p(z)+q(z)}{q(z)-p(z)} dz 
- y \theta(y) + \int_{x_0}^y \frac{p(z)+q(z)}{q(z)-p(z)} dz + \theta(x)(y-x) 
\\
&=& y-x + y(\theta(x)-\theta(y))+\int_x^y \frac{p(z)+q(z)}{q(z)-p(z)} dz \\
&=& \int_x^y dz\left(1-\frac{2y}{q(z)-p(z)} +\frac{p(z)+q(z)}{q(z)-p(z)}\right) \\
&=& \int_x^y dz \left(\frac{2(q(z)-y)}{q(z)-p(z)}\right) \geq 0,
\end{eqnarray*}
since $q(z) \geq b \geq y$. Now for $a\leq y < x$,
\begin{eqnarray*}
L^{\psi,\delta}(x,y) &=& x-y + y(\theta(x)-\theta(y))+\int_x^y 
\frac{p(z)+q(z)}{q(z)-p(z)} dz \\
&=& \int_y^x dz \left(-1+\frac{2y}{q(z)-p(z)} - \frac{p(z)+q(z)}{q(z)-p(z)} 
\right) \\
&=& \int_y^x dz \frac{2(y-p(z))}{q(z)-p(z)} \geq 0,
\end{eqnarray*}
since $p(z) \leq a \leq y$. 

Next, suppose $y > b$. Let $\gamma=\tq(y)$. Then  
$\psi(y)=(y-\gamma)+\psi(\gamma)+\delta(\gamma)(y-\gamma)$
and $L^{\psi,\delta}(\gamma,y)=0$ by construction.
It follows that
\begin{eqnarray*}
L^{\psi,\delta}(x,y)&=& 
(y-x)+\psi(x)-\psi(\gamma)-(y-\gamma)-\delta(\gamma)(y-\gamma)+\delta(x)(y-x) 
\\
&=& \alpha(x)-x \theta(x)-\alpha(\gamma)+\gamma 
\theta(\gamma)+\gamma-x+y(\theta(x)-\theta(\gamma)) \\
&=& -\int_{x_0}^x \frac{p(z)+q(z)}{q(z)-p(z)} dz + \int_{x_0}^\gamma 
\frac{p(z)+q(z)}{q(z)-p(z)}dz +\int_x^\gamma dz - \int_x^\gamma 
\frac{2y}{q(z)-p(z)} dz \\
&=& \int_x^\gamma \frac{dz}{q(z)-p(z)} 2(q(z)-y).   \label{eqn:Ly>b} 
\end{eqnarray*}
Now either $q(x)<y$ or $q(x)> y$ or $q(x)=y$. In the first case, since $q$ is 
decreasing, $x\geq\tq(y)=\gamma$. Since for $z \in (\gamma,x)$, 
$q(z)\leq y$ we have $L^{\psi,\delta}(x,y) \geq 0$. On the other hand if 
$q(x) > 
y$ then $x \leq \tq(y)=\gamma$ and for $z \in (x,\gamma)$, $q(z) \geq 
y$ and again $L^{\psi,\delta}(x,y) \geq 0$. Finally if $q(x)=y$ then $x \in 
q^{-1}(y)$ and $q(z)=y$ on $(x,\gamma)$ so that $L^{\psi,\delta} = 0$.
Similar arguments apply when $y<a$.

Finally suppose $x \notin [a,b]$. We cover the case $x<a$, the case $x>b$ 
being similar. For $y < \tp(x)$,
\[ 0 \leq L^{\psi,\delta}(\tp(x),y) = \psi(\tp(x)) - \psi(y)
+ \tp(x) - y + \theta(\tp(x))(y - \tp(x)) .\]
But also $L^{\psi,\delta}(\tp(x),x) = 0$, and thus
\[ \psi(x) = \psi(\tp(x)) + \tp(x) - x + \theta(\tp(x))(x 
- \tp(x)). \]
Hence
\begin{eqnarray*}
0 \leq L^{\psi,\delta}(\tp(x),y) & = & x - y + \psi(x) - \psi(y) - 
\theta(\tp(x))(x-y) \\
& \leq & |x - y| + \psi(x) - \psi(y) -
\theta(\tp(x))(x-y) = L^{\psi,\delta}(x,y).
\end{eqnarray*}
For $y \geq \tp(x)$,
\[
L^{\psi,\delta}(x,y) - L^{\psi,\delta}(\tp(x),y) = 2 (\tp(x) - x)>0,
\] 
and hence $L^{\psi,\delta}(x,y) > 0$.
\end{proof}

\begin{corollary} 
\label{cor:Lgeq0}
$L^{\psi,\delta}(x,y)=0$ for all pairs $x \in (a,b)$ 
and $y \in [\lim_{z \downarrow x}q(z), \lim_{z \uparrow x} q(z)]$.
Similarly $L^{\psi,\delta}(x,y)=0$ for $y \in [\lim_{z \downarrow x}p(z), 
\lim_{z \uparrow x} p(z)]$.
\end{corollary}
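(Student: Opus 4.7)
The plan is to build directly on the integral formula for $L^{\psi,\delta}$ derived inside the proof of Theorem~\ref{thm:Lgeq0}. For $x\in[a,b]$ and $y>b$ that proof established
\[
L^{\psi,\delta}(x,y)=\int_x^{\tq(y)}\frac{2(q(z)-y)}{q(z)-p(z)}\,dz,
\]
and an entirely symmetric calculation gives $L^{\psi,\delta}(x,y)=\int_{\tp(y)}^x\frac{2(y-p(z))}{q(z)-p(z)}\,dz$ for $y<a$. The corollary therefore reduces to showing that the relevant integral vanishes whenever $y$ lies in the jump interval of $q$ (respectively $p$) at $x$.

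Fix $x\in(a,b)$ and write $q(x^-):=\lim_{z\uparrow x}q(z)$ and $q(x^+):=\lim_{z\downarrow x}q(z)$; monotonicity gives $q(x^-)\geq q(x^+)\geq b$. For $y$ in the open interior $(q(x^+),q(x^-))$, monotonicity of $q$ forces $q(z)<y$ for every $z>x$ and $q(z)>y$ for every $z<x$, so $\tq(y)=\sup\{z\in E:q(z)\geq y\}=x$ and the integral is over a degenerate interval. At the two endpoints $y=q(x^\pm)$ I would distinguish two subcases: either the same conclusion $\tq(y)=x$ holds, or $q$ is locally constant and equal to $y$ on an interval abutting $x$, in which case $\tq(y)\neq x$ but the integrand $q(z)-y$ vanishes identically on the range of integration. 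Either way the integral is zero. The special possibility that $b$ itself lies in the jump interval (so $q(x^+)=b$) is handled using the companion formula $L^{\psi,\delta}(x,y)=\int_x^y\frac{2(q(z)-y)}{q(z)-p(z)}\,dz$ that the theorem produced for $y\in[a,b]$ with $y\geq x$: at $y=b$ one has $q(z)\equiv b$ on $(x,b)$ and the integrand vanishes.

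The corresponding statement for $y$ in the jump interval of $p$ at $x$ follows by an identical analysis applied to the $y<a$ formula above. The main obstacle is not analytic but purely bookkeeping: the one-sided inverses $\tq$ and $\tp$ were introduced precisely to resolve the ambiguity at jumps of $q$ and $p$, and the corollary is essentially the assertion that this choice of representative makes the Lagrangian formula continue to vanish across every jump. Once $\tq(y)$ (respectively $\tp(y)$) is correctly identified in each case, there is no further calculation beyond what was already performed in the proof of Theorem~\ref{thm:Lgeq0}.
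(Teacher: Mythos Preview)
Your argument is correct. You work directly with the integral representation
\[
L^{\psi,\delta}(x,y)=\int_x^{\tq(y)}\frac{2(q(z)-y)}{q(z)-p(z)}\,dz
\]
established in the proof of Theorem~\ref{thm:Lgeq0}, and verify case by case that for $y$ in the jump interval of $q$ at $x$ either $\tq(y)=x$ (so the domain of integration is degenerate) or $q\equiv y$ on $[x,\tq(y)]$ (so the integrand vanishes). The treatment of the boundary possibility $q(x^+)=b$ via the formula for $y\in[a,b]$ is also handled correctly.

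The paper proceeds differently. Rather than revisiting the integral, it observes that $\theta$, $\alpha$ (and hence $\psi$, $\delta$) are defined through integrals of $p$ and $q$, so altering $q$ at the single point $\hat{x}$ leaves them unchanged. One therefore replaces $q$ by $\hat{q}$ with $\hat{q}(\hat{x})=\hat{y}$ and $\hat{q}=q$ elsewhere, notes that $L^{\psi,\delta}=L^{\hat{\psi},\hat{\delta}}$, and invokes Theorem~\ref{thm:Lgeq0} for the pair $(p,\hat{q})$ to get $L^{\hat{\psi},\hat{\delta}}(\hat{x},\hat{q}(\hat{x}))=0$ directly. This is shorter and avoids any fresh case analysis, at the cost of relying on the (mildly delicate) assertion that the single-point modification does not perturb $\psi$ anywhere relevant. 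Your route is more explicit and self-contained; the paper's route is more conceptual. Both are valid.
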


\begin{proof} 
Suppose that $q$ has a jump at $\hat{x} \in E$ and that $\hat{y} \in
[\lim_{z \downarrow \hat{x}}q(z), \lim_{z \uparrow \hat{x}} q(z)]$.
If $\hat{y}$ is such that $\hat{y} \neq q(\hat{x})$ then we can modify 
$q$ by 
defining $\hat{q}(\hat{x}) = \hat{y}$ and $\hat{q}(x) = q(x)$ otherwise.

Write $\hat{\psi} = \psi_{p,\hat{q}}$ and $\hat{\delta} = 
\delta_{p,\hat{q}}$. Clearly, changing the definition of $q$ at the 
single point $\hat{x}$ makes no difference to the definitions of $\psi$ 
or $\delta$. Then, 
\[ L^{\psi,\delta}(\hat{x},\hat{y}) = 
L^{\hat{\psi},\hat{\delta}}(\hat{x},\hat{y}) =
L^{\hat{\psi},\hat{\delta}}(\hat{x},q(\hat{x})) = 0, \]
the last equality following from Theorem~\ref{thm:Lgeq0}.


\end{proof}

\section{The general case}
The goal in this section is to show how to construct a martingale coupling, 
in the case where Assumption~\ref{ass:disjoint} holds. The aim is to construct appropriate generalised versions of $p$ and $q$, allowing for atoms, in 
such a way that if $X \sim \mu$, $Y \in \{ p(X),X,q(X) \}$ and 
$\E[Y|X]=X$ then $Y \sim \nu$.

Define $D(x) = C_\nu(x)-C_\mu(x)$. 
Then $D$ is a non-negative function such that $\lim_{|x| \rightarrow 
\infty}D(x)=0$ and, because of Assumption~\ref{ass:disjoint}, $D$ is 
locally convex on 
$E^c$ and locally concave on $E$.

\subsection{Case 1: $\mu \wedge \nu=0$}
\label{ssec:casemuwedgenu=0}
 
Suppose first that $\mu \wedge \nu=0$ so that $\eta = \mu$ and $\gamma = \nu$ 
are probability measures.

Note that not both $\eta$ and $\gamma$ can have atoms at $a$. Let $\gamma_a= 
\gamma((-\infty,a]) \in (0,1)$.
Then $\gamma_a = \max \{ D'(a-), D'(a+) \}$.

Let $F_\eta^{-1} : [0,1] \mapsto [a,b]$ be the left continuous inverse 
of $\eta ((-\infty,x]) = \eta([a,x])$. 
For $u \in (0,1)$ define $\sE_u(x) = D(F_{\eta}^{-1}(u)) + (x - 
F_{\eta}^{-1}(u)) (\gamma_a - u) - 
D(x)$ and 
\[ \phi(u) = \sup_{p < F_{\eta}^{-1}(u) < q} \frac{\sE_u(q) - D(p)}{q-p} 
.
\]
In regular cases, $\phi$ is the slope of the line of which is tangent to 
both $D$ on $(\infty,a]$ and $\sE_u$ on $[b,\infty)$.
Let $P(u), Q(u)$ be numbers such that $P(u)\leq a$, $Q(u) \geq b$ and 
\[ \phi(u) =  \frac{\sE_u(Q(u)) - D(P(u))}{Q(u)-P(u)}.
 \]
It is clear from the construction, see also Figure~\ref{fig:2}, 
that $\phi$, $P$ and $Q$ are decreasing functions.

\begin{figure}[ht!]\label{fig:2}
\begin{center}
\includegraphics[height=6cm,width=16cm]{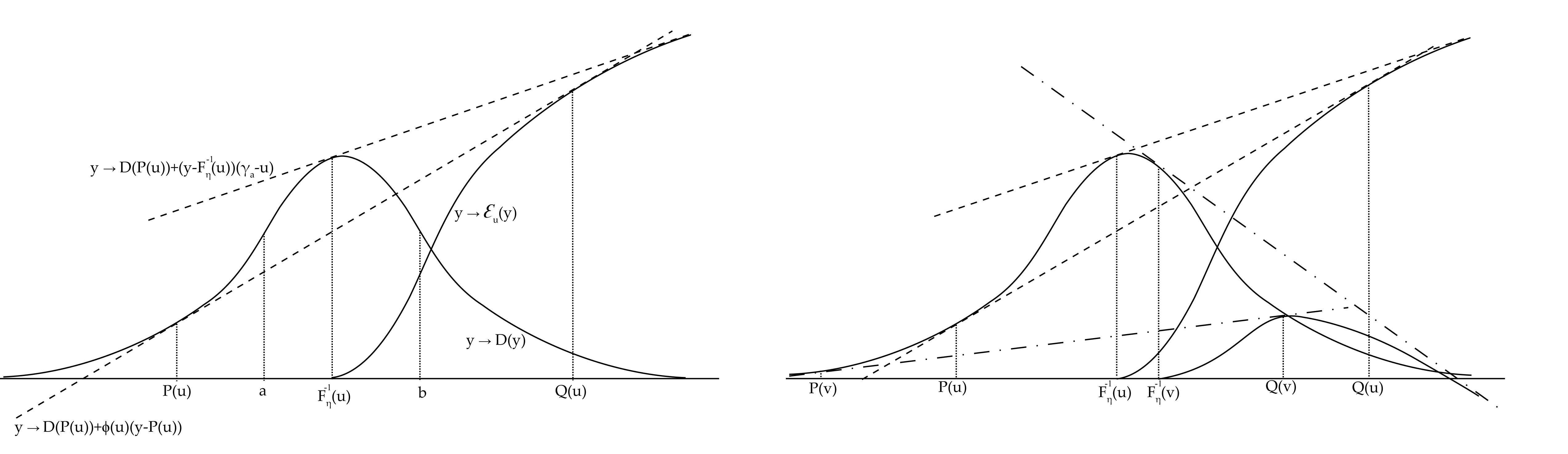}
\end{center}
\caption{The left figure shows, for a single value of $u$, how $\phi(u)$ is 
the slope of the line which is tangent to $D(\cdot)$ at $P(u)$ and 
$\sE_{u}(\cdot)$ at $Q(u)$. The right figure shows how as $u$ increases 
$\sE_u(\cdot)$ decreases and then $P$, $Q$ and $\phi$ also all  
decrease.}
\end{figure}

Define 
\begin{equation}
\label{eqn:zetadef}
\zeta(u)= \inf_{p < F_{\eta}^{-1}(u) < q} \frac{D(q) - \sE_u(p)}{q-p} 
\gamma_a - u - \phi(u). 
\end{equation}
Then $\zeta$ is the slope of the line which is tangent to both ${\sE}_u(x) = 
D(F_{\eta}^{-1}(u)) + (x - F_{\eta}^{-1}(u)) (\gamma_a - u) - D(x)$, defined 
over $x \leq a$ and $D(x)$ defined over $x \geq b$.

Note that if $P(u)$ is unique and $D'$ is continuous at $P(u)$ then $D'(P(u))= 
\phi(u)$. Similarly, if $Q(u)$ is unique and $D'$ is continuous at $Q(u)$ then 
$\sE'_u(Q(u)) = \phi(u)$ which is equivalent to $\zeta(u) = D'(Q(u))$. Since 
$\phi$ and $\zeta$ are 
decreasing 
functions and $D'(P(u))+D'(Q(u)) = \gamma_a - u$, it follows that $\phi$ is 
Lipschitz 
with unit Lipschitz constant. The following lemma shows that this holds 
in general.

\begin{lemma}
$\phi$ is a decreasing Lipschitz function and for $0 \leq u <\hat{u} < 1$, 
$\phi(u) 
\geq \phi(\hat{u}) \geq \phi(u) + (u-\hat{u})$.
\end{lemma}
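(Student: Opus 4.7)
The plan is to establish the bounds pointwise in the objective and then pass to suprema. Fix $u < \hat u$ in $(0,1)$ and set $r = F_\eta^{-1}(u)$, $\hat r = F_\eta^{-1}(\hat u)$, both lying in $[a,b]$ with $r \le \hat r$. Writing
\[
H_u(p,q) := \frac{\sE_u(q) - D(p)}{q-p},
\]
the target pointwise estimate is $0 \le H_u(p,q) - H_{\hat u}(p,q) \le \hat u - u$ for every admissible $(p,q)$.

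The key ingredient is a one-sided subgradient identity for $D$ at $r$. In Case~1 we have $\eta=\mu$ and $\gamma=\nu$, and by Assumption~\ref{ass:disjoint} the measure $\nu$ places no mass in $(a,b)$. Direct computation from $D = C_\nu - C_\mu$ then gives, for $r \in [a,b]$,
\[
D'_+(r) = \gamma_a - F_\mu(r), \qquad D'_-(r) = \gamma_a - F_\mu(r-).
\]
Since $r = F_\mu^{-1}(u)$ satisfies $F_\mu(r-) \le u \le F_\mu(r)$, we deduce $\gamma_a - u \in [D'_+(r), D'_-(r)]$, i.e.\ $\gamma_a - u$ is a supergradient of the concave function $D|_{[a,b]}$ at $r$ (and analogously $\gamma_a - \hat u$ at $\hat r$). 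Applying the supergradient inequality at $r$ with argument $\hat r$, and at $\hat r$ with argument $r$, yields the sandwich
\[
-(\gamma_a - u)(\hat r - r) \le D(r) - D(\hat r) \le -(\gamma_a - \hat u)(\hat r - r).
\]
A direct expansion of the definitions then gives
\[
\sE_u(q) - \sE_{\hat u}(q) = [D(r) - D(\hat r)] + (\gamma_a - u)(\hat r - r) + (q - \hat r)(\hat u - u),
\]
and substituting the sandwich bounds collapses this to $0 \le \sE_u(q) - \sE_{\hat u}(q) \le (q-r)(\hat u - u)$. Dividing by $q-p>0$ and using $p \le a \le r$ (so $(q-r)/(q-p) \le 1$) gives the pointwise Lipschitz estimate.

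Passing to suprema then gives both inequalities of the lemma, provided the sup defining $\phi(u)$ may be taken over the $u$-independent set $\{(p,q): p \le a,\ q \ge b\}$, consistent with the geometric interpretation in Figure~\ref{fig:2} of $\phi(u)$ as the slope of the common tangent to $D|_{(-\infty,a]}$ and $\sE_u|_{[b,\infty)}$. Taking suprema in $H_u \ge H_{\hat u}$ gives $\phi(u) \ge \phi(\hat u)$, while $H_u \le H_{\hat u} + (\hat u - u) \le \phi(\hat u) + (\hat u - u)$ yields $\phi(u) \le \phi(\hat u) + (\hat u - u)$, equivalent to the claimed $\phi(\hat u) \ge \phi(u) + (u - \hat u)$. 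The main obstacle is the subgradient identity in the general (possibly atomic) setting: the one-sided derivative computation above is precisely what replaces the naive identity $D'(r) = \gamma_a - u$ that would hold under absolute continuity.
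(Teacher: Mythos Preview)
Your argument is correct and uses the same two ingredients as the paper: the concavity (supergradient) inequality for $D$ on $(a,b)$, and the fact that $(q-r)/(q-p)\in(0,1)$ when $p\le a\le r$. The only packaging difference is that the paper evaluates the quotient directly at the attained maximisers $(P(u),Q(u))$ and $(P(\hat u),Q(\hat u))$, which, having been declared just before the lemma to satisfy $P\le a$ and $Q\ge b$, are feasible for both $u$ and $\hat u$; this sidesteps your proviso about replacing the $u$-dependent constraint $p<F_\eta^{-1}(u)<q$ by the fixed domain $\{p\le a,\ q\ge b\}$.
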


\begin{proof}
It is clear that $\sE_u(y)$ is decreasing in $u$.
Then, for $\hat{u} > u$,
\[ \phi(u) \geq \frac{\sE_u(Q(\hat{u})) - 
D(P(\hat{u}))}{Q(\hat{u})-P(\hat{u})}
 > \frac{\sE_{\hat{u}}(Q(\hat{u})) - 
D(P(\hat{u}))}{Q(\hat{u})-P(\hat{u})} = 
\phi(\hat{u}), \]
and hence $\phi$ is decreasing. Moreover,
\begin{eqnarray*}
\phi(\hat{u}) \! & \geq & \! \frac{\sE_{\hat{u}}(Q(u)) - D(P(u))}{Q(u) - P(u)} 
\\
              & = & \! \phi(u) + (u-\hat{u}) 
\frac{Q(u)-F_{\eta}^{-1}(u)}{Q(u)-P(u)}
                      + \frac{D(F_{\eta}^{-1}(\hat{u})) - 
D(F_{\eta}^{-1}(u)) + 
(F_{\eta}^{-1}(u)-F_{\eta}^{-1})(\hat{u})(\gamma_a - 
\hat{u})}{Q(u)-P(u)} .
\end{eqnarray*}
By construction the fraction in the middle term lies in $(0,1)$. Further, 
since $D$ is concave on $(a,b)$, $D$ lies below its tangents 
and the numerator in the final term is positive.
Hence $\phi(\hat{u}) \geq \phi(u) + u - \hat{u}$. 
\end{proof}

Since $\phi$ is Lipschitz it is also absolutely continuous and we can write 
$\phi(u) = \gamma_a + \int_0^u \phi'(w) dw$ for some function $\phi'$.

\begin{lemma} We have  
$D(P(u)) - P(u) \phi(u) = \int_u^1 P(v) \phi'(v) dv$, 
$D(Q(u)) - Q(u) \zeta(u) = \int_u^1 Q(v) \zeta'(v) dv$ and 
$D(F_{\eta}^{-1}(u)) - F_{\eta}^{-1}(u)(\gamma_a - u) =
- \int_u^1 F_{\eta}^{-1}(v) dv$.
\end{lemma}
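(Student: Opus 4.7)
The plan is to treat each of the three identities via the fundamental theorem of calculus: verify that both sides agree at the endpoint $u=1$, and show that their derivatives in $u$ coincide almost everywhere on $(0,1)$.

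For the first identity, write $\Phi_1(u) = D(P(u)) - P(u)\phi(u)$. Since $P(u)$ attains the supremum in the definition of $\phi(u)$ with tangency of $D$ on $(-\infty,a]$ (where $D$ is convex), I expect the first-order condition $D'(P(u)) = \phi(u)$ at points of differentiability. The chain rule then gives
\[
\Phi_1'(u) = D'(P(u)) P'(u) - P'(u)\phi(u) - P(u)\phi'(u) = -P(u)\phi'(u),
\]
which is exactly $\frac{d}{du}\int_u^1 P(v)\phi'(v)\,dv$. The second identity follows by the same argument with $(P,\phi)$ replaced by $(Q,\zeta)$ and the tangency condition $D'(Q(u)) = \zeta(u)$ on $[b,\infty)$. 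For the third identity, on the interior of $E$ one has $D'(x) = F_\nu(x) - F_\mu(x)$, and at $x = F_\eta^{-1}(u)$ this equals $\gamma_a - u$; the derivative of the LHS then telescopes to $F_\eta^{-1}(u)$, matching the derivative of $-\int_u^1 F_\eta^{-1}(v)\,dv$.

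For the boundary conditions at $u=1$, the RHS of each identity vanishes, so I would need to verify that each LHS also vanishes there. Geometrically this amounts to: the tangents to $D$ at $P(1)$ and $Q(1)$ pass through the origin (for identities one and two), and $D(F_\eta^{-1}(1)) = F_\eta^{-1}(1)(\gamma_a - 1)$ (for identity three). These statements should follow by identifying $P(1), Q(1), F_\eta^{-1}(1)$ with the endpoints of the underlying construction and exploiting the decay $D(x) \to 0$ at $\pm\infty$ together with the mean-matching condition $\int y\,\mu(dy) = \int y\,\nu(dy)$ and the disjoint-support assumption from Assumption~\ref{ass:disjoint}.

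The main obstacle is the possible lack of classical differentiability. When $\mu$ or $\nu$ has atoms, $P$, $Q$ and $F_\eta^{-1}$ may be discontinuous and $D$ can fail to be everywhere differentiable, so the derivatives above must be reinterpreted as Radon--Nikodym densities or as Stieltjes differentials. The Lipschitz property of $\phi$ established in the preceding lemma guarantees that $\phi'$ exists almost everywhere, making the integrals well-defined. The delicate point is that jumps of $P$ must occur where $\phi$ is locally constant (so that $d\phi=0$ on the jump set of $P$), and correspondingly for $Q$ and $\zeta$; this pairing is built into the variational definitions of $(P,Q,\phi,\zeta)$ and is the principal thing that must be verified carefully so that the Stieltjes integration by parts underpinning the derivative-matching step remains valid.
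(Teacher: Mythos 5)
Your plan (verify the boundary value and match derivatives a.e.) is the right spirit, but you have left the decisive step as an unverified ``delicate point'' and your diagnosis of that point is not quite right. The paper closes the gap with one observation you are missing: since $\phi(u)$ is a subgradient of $D$ at $P(u)$, the quantity $D(P(u)) - P(u)\phi(u)$ is exactly the Legendre-type conjugate $D_c(\phi(u))$, where $D_c(\theta) := \inf_{x\le a}\{D(x)-\theta x\}$. This identity is what makes the argument go through: the left-hand side of the first identity depends on $u$ only through $\phi(u)$, and since $D_c$ is concave (hence absolutely continuous with $D_c(\theta)=\int_0^\theta D_c'$) and $\phi$ is Lipschitz, one can simply change variables $\chi = \phi(v)$ and use $D_c'(\phi(v)) = -P(v)$ a.e.\ to obtain $D_c(\phi(u)) = -\int_u^1 \phi'(v)D_c'(\phi(v))\,dv = \int_u^1 P(v)\phi'(v)\,dv$. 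No differentiation of $P$ is ever needed, and the boundary condition comes for free from $D_c(0)=0$ rather than from a separate geometric argument.

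Without this identification, your chain-rule computation $\Phi_1'(u) = -P(u)\phi'(u)$ does not obviously make sense at the bad points, and your proposed mechanism for why it should survive is incorrect: jumps of $P$ need \emph{not} occur where $\phi$ is locally constant. Rather, $P$ jumps precisely where $D$ is affine on an interval of $(-\infty,a]$, which corresponds to $\phi$ \emph{crossing} (not sitting at) the common slope of that affine piece; conversely, intervals where $\phi$ is constant correspond to kinks of $D$, on which $P$ is constant. The correct resolution is that $D(P(u))-P(u)\phi(u)$ factors through $\phi(u)$ via $D_c$, so the possible non-differentiability of $P$ is irrelevant to the integral identity. You should also note that your derivative-matching step implicitly needs $\phi(1)=0$ for the endpoint of the change of variables, which follows from $D\to 0$ at the left edge of $\gamma$'s support.
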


\begin{proof} We prove the first statement, the other two being 
similar.
Given $\{ D(x); x \leq a \}$, for $\theta \in (\gamma_a-1, \gamma_a)$ define the 
conjugate function
$D_c(\theta) = \inf_{x \leq a} \{ D(x) - \theta x \}$. Then  $D_c(0)=0$ and 
$D_c$ is concave so that $D_c(\theta) = \int_0^\theta D_c'(\chi) d \chi$. 
Note that $\phi(u)$ is an element of the subdifferential of $D$ at 
$P(u)$ and hence
\begin{equation*}
D(P(u)) - P(u) \phi(u) = D_c(\phi(u)) = \int_0^{\phi(u)} D_c'(\chi) d \chi 
 = - \int^1_{u} \phi'(v) D_c'(\phi(v)) dv . 
\end{equation*}
Finally note that $D_c'(\phi(v)) = -(D')^{-1}(\phi(v)) = - P(v)$, almost 
everywhere.
\end{proof}

Note that $\phi$ has been constructed to solve
\[ \phi(u)(Q(u)-P(u)) = D(F_{\eta}^{-1}(u)) + 
(Q(u)-F_{\eta}^{-1}(u))(\gamma_a - u) - D(Q(u)) - 
D(P(u)). \]
Using $\phi(u)+ \zeta(u) = \gamma_a - u$ we obtain
\begin{equation}
\label{eqn:characterisationPQ} 
D(P(u)) - P(u) \phi(u) + D(Q(u)) - Q(u) 
\zeta(u) = 
D(F_{\eta}^{-1}(u)) - F_{\eta}^{-1}(u)(\gamma_a - u) 
\end{equation}
and hence
\[ \int_u^1 P(v) \phi'(v) dv + \int_u^1 Q(v) \zeta'(v) dv = - \int_u^1 
F_{\eta}^{-1}(v) dv .
\]
Then we have both $\phi'(u) + \zeta'(u) = -1$ and $P(u) \phi'(u) + 
Q(u)\zeta'(u) 
= -F_{\eta}^{-1}(v)$ Lebesgue almost everywhere on $(0,1)$. Hence
\[ \phi'(u) = - \frac{Q(u)-F_{\eta}^{-1}(u)}{Q(u)-P(u)} \hspace{10mm} 
\mbox{almost everywhere.} \]

\begin{theorem}
\label{thm:coupling}
Let $U$ and $V$ be independent uniform random variables.
Let
$X = F_{\eta}^{-1}(U)$ and $Y =  P(U) I_{A_{(U,V)}} + Q(U)I_{A^C_{(U,V)}}$ 
where
\[ A_{(u,v)} = \left\{ v \leq \frac{Q(u)  
-F_{\eta}^{-1}(u)}{(Q(u)-P(u))} 
\right\}. \]
Then $X$ has law $\eta$ and 
$Y$ has law $\gamma$. 
\end{theorem}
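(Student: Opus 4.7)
The plan is to verify the two marginal assertions separately. That $X = F_\eta^{-1}(U) \sim \eta$ is the standard probability integral transform applied to the left-continuous inverse of the distribution function of $\eta$, and requires no further argument. The substance of the theorem is therefore the claim $Y \sim \gamma$, which I would establish by computing the cumulative distribution function $\Prob(Y \leq x)$ directly.

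By the construction of $A_{(u,v)}$ together with independence of $V$ from $U$, the conditional law of $Y$ given $U = u$ places mass $(Q(u)-F_\eta^{-1}(u))/(Q(u)-P(u))$ at $P(u)$ and the complementary mass at $Q(u)$. Appealing to the identity $\phi'(u) = -(Q(u)-F_\eta^{-1}(u))/(Q(u)-P(u))$ derived immediately before the theorem, together with its consequence $\zeta'(u) = -(F_\eta^{-1}(u)-P(u))/(Q(u)-P(u))$ via $\phi'+\zeta'=-1$, one obtains
\[
\Prob(Y \leq x) = -\int_0^1 \Indi_{\{P(u)\leq x\}}\phi'(u)\,du \;-\; \int_0^1 \Indi_{\{Q(u)\leq x\}}\zeta'(u)\,du.
\]
I would then split into three cases. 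For $x \leq a$ only the $P$-term contributes (as $Q(u) \geq b > x$); monotonicity of $P$ reduces the integral to $\phi(u^*(x)) - \phi(1)$ where $P(u^*(x))=x$, and the tangency identification $\phi(u) = D'(P(u))$ combined with $D'|_{(-\infty,a]} = \gamma((-\infty,\cdot])$ and the boundary values $\phi(0)=\gamma_a$, $\phi(1)=0$ (which follow from the endpoint descriptions of $P,Q$) yields $\gamma((-\infty,x])$. For $x \in (a,b)$ both indicators are trivially decided, giving $\Prob(Y\leq x) = \phi(0)-\phi(1) = \gamma_a = \gamma((-\infty,x])$ since $\gamma$ is supported off $E$. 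For $x \geq b$ the symmetric argument on the $Q$-branch, using $\zeta(u) = D'(Q(u))$ and $D'|_{[b,\infty)} = \gamma((-\infty,\cdot]) - 1$, supplies the missing mass $\gamma([b,x])$.

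The main obstacle is the possibility that $\mu$ or $\nu$ has atoms, so that $F_\eta^{-1}$ may jump and $P$, $Q$ are merely decreasing with set-valued inverses; this is precisely the reason for introducing the auxiliary randomisation by $V$. In this regime the tangency identification $\phi(u) = D'(P(u))$ must be interpreted via supporting slopes of $D$ rather than pointwise derivatives. The integral identities
\[
D(P(u)) - P(u)\phi(u) = \int_u^1 P(v)\phi'(v)\,dv, \qquad
D(Q(u)) - Q(u)\zeta(u) = \int_u^1 Q(v)\zeta'(v)\,dv
\]
proved in the preceding lemma, together with the absolute continuity (Lipschitz property) of $\phi$, are the machinery needed to push the tangency bookkeeping through: flat parts of $\phi$ (where $P$ jumps) contribute no probability mass and correspond to gaps in the support of $\gamma$, while jumps of $\phi$ (where $P$ is locally flat) produce atoms of $\gamma$ of exactly the correct size, and similarly for $Q$ and $\zeta$ on $[b,\infty)$. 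Verifying this matching carefully is the delicate step of the argument.
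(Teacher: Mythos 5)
Your proposal is correct and takes essentially the same route as the paper's proof: compute $\Prob(Y\leq y)$ by integrating the conditional weight on $P(u)$, recognise it as $-\phi'(u)$, integrate to get $\phi$ at the generalised inverse of $P$ (and symmetrically $\zeta$ for the $Q$-branch), and read off $\gamma$'s distribution function from the tangency identification $\phi = D'\circ P$. You simply spell out more details than the paper does (the middle region $x\in(a,b)$, the boundary values $\phi(0)=\gamma_a$, $\phi(1)=0$, and the supporting-slope reading of $\phi = D'\circ P$ when atoms are present), whereas the paper writes only the case $y\leq a$ and invokes symmetry for the rest.
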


\begin{proof} It is immediate that $X \sim \eta$.
For $y \leq a$ let $P_R^{-1} = \inf \{ u \in (0,1): P(u) \leq y \}$. Then
\begin{eqnarray*} \Prob(Y \leq y) & = & \int_{P_R^{-1}(y)}^1 \frac{Q(u) 
-F_{\eta}^{-1}(u)}{Q(u)-P(u)} du \\
& = &
- \int_{P_R^{-1}(y)}^1 \phi'(u) du = \phi(P_R^{-1}(y)) = D'(y+) = 
\gamma((-\infty,y]). \end{eqnarray*}
A similar argument for $y>b$ gives that $Y \sim \gamma$.
\end{proof}

\subsection{\bf Case 2: $\mu \wedge \nu \neq 0$}  
If $(\mu \wedge \nu)(\R) = 1$ then we must have $Y=X$ and $\E|Y-X|=0$. 
Otherwise, let 
$(\mu \wedge \nu)(\R) = 
\kappa \in (0,1)$. Then $\eta(\R)=\eta(E)= 1- \kappa = \gamma(E^c) = 
\gamma(\R)$. Define probability measures $\overline{\mu \wedge \nu}$, 
$\overline{\eta}$ 
and $\overline{\gamma}$ by 
$\overline{\mu \wedge \nu}(A) = (\mu \wedge \nu)(A)/\kappa$,
$\overline{\eta}(A) = \eta(A)/(1-\kappa)$ and $\overline{\gamma}(A) = 
\gamma(A)/(1-\kappa)$. 

The following result follows immediately from Theorem~\ref{thm:coupling}.


\begin{theorem}
\label{thm:couplingGen}
Let $U$ and $V$ be independent uniform random variables.
and let $Z$ be a Bernoulli random variable with parameter $\kappa$. 
Conditional on
$Z=1$, then set $Y=X \sim \overline{\mu \wedge \nu}$. Otherwise, 
conditional
$Z=0$,
let
$X = F_{\overline{\eta}}^{-1}(U)$ and $Y =  P(U) I_{A_{(U,V)}} + 
Q(U)I_{A^C_{(U,V)}}$
where
\[ A_{(u,v)} = \left\{ v \leq \frac{Q(u)
-F_{\overline{eta}}^{-1}(u)}{(Q(u)-P(u))}
\right\}, \]
where $P$ and $Q$ are defined as in Section~\ref{ssec:casemuwedgenu=0} 
for the disjoint measures $\overline{\eta}$ and $\overline{\gamma}$.
 
Then $X$ has law $\mu$ and
$Y$ has law $\nu$.
\end{theorem}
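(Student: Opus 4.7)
The plan is to deduce the general case directly from Theorem~\ref{thm:coupling} applied to the normalised measures $\overline{\eta}$ and $\overline{\gamma}$, using the law of total probability to decompose over the value of $Z$.

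First I would record the elementary decomposition $\mu = (\mu\wedge\nu)+\eta$ and $\nu = (\mu\wedge\nu)+\gamma$ (a restatement of the definitions of positive and negative parts). I would then check that $(\overline{\eta},\overline{\gamma})$ satisfies the hypotheses of Theorem~\ref{thm:coupling}: they are probability measures with disjoint supports (Assumption~\ref{ass:disjoint}, preserved under normalisation), so in particular $\overline{\eta}\wedge\overline{\gamma}=0$; and they inherit the convex order from $\mu$ and $\nu$ via the identity
\begin{equation*}
C_\nu - C_\mu = (C_{\mu\wedge\nu}+C_\gamma) - (C_{\mu\wedge\nu}+C_\eta) = C_\gamma - C_\eta \geq 0,
\end{equation*}
which, after normalisation and together with equality of means, gives $C_{\overline{\gamma}} \geq C_{\overline{\eta}}$. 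Hence the functions $\phi, P, Q$ of Section~\ref{ssec:casemuwedgenu=0} are well-defined for $(\overline{\eta},\overline{\gamma})$, and Theorem~\ref{thm:coupling} asserts that, conditional on $\{Z=0\}$, the pair built from $(U,V)$ has marginals $\overline{\eta}$ and $\overline{\gamma}$ respectively.

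Because $Z$ is independent of $(U,V)$, for every Borel set $A$,
\begin{align*}
\Prob(X\in A) &= \kappa\,\Prob(X\in A\mid Z=1) + (1-\kappa)\,\Prob(X\in A\mid Z=0) \\
              &= \kappa\,\overline{\mu\wedge\nu}(A) + (1-\kappa)\,\overline{\eta}(A) = (\mu\wedge\nu)(A) + \eta(A) = \mu(A),
\end{align*}
and the analogous calculation for $Y$, using $Y=X\sim\overline{\mu\wedge\nu}$ on $\{Z=1\}$ and $Y\sim\overline{\gamma}$ on $\{Z=0\}$ by Theorem~\ref{thm:coupling}, gives $\Prob(Y\in A)=\nu(A)$. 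No substantive obstacle arises: the argument is essentially bookkeeping on top of Theorem~\ref{thm:coupling}, the one point meriting explicit mention being the one-line verification that convex order transfers from $(\mu,\nu)$ to $(\overline{\eta},\overline{\gamma})$ so that the inner construction is legitimately applicable.
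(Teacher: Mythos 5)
Your proof is correct and takes the same approach as the paper, which simply asserts that the theorem "follows immediately from Theorem~\ref{thm:coupling}" without further detail. You have usefully spelled out the implicit bookkeeping — normalisation, the mixture over $Z$, and the one-line transfer of convex order from $(\mu,\nu)$ to $(\overline{\eta},\overline{\gamma})$ (where for completeness one should also note that $\eta(\R)=\gamma(\R)=1-\kappa$ and that $\eta$ and $\gamma$ share a common mean, both inherited from $\mu\leq_{cx}\nu$) — so the two proofs are the same in substance.
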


\subsection{Strong duality}
\begin{theorem}
\label{thm:strongduality}
Suppose $\mu \leq \nu$ in convex order and Assumption~\ref{ass:disjoint} 
holds.
Then $\sP(\mu,\nu) = \tilde{\sD}(\mu,\nu) = \sD(\mu,\nu)$. Moreover, the 
optimal coupling is given by the construction in 
Theorem~\ref{thm:couplingGen}
and the optimal semi-static hedging strategy is given by the expressions in
(\ref{eq:deltadef}) and (\ref{eq:psidef2}), for $(p(x),q(x)) = 
(P(F_{\overline{\eta}}(x)),Q(F_{\overline{\eta}}(x)))$. 
\end{theorem}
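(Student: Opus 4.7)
The plan is to squeeze the chain of inequalities
\[
\sD(\mu,\nu) \leq \tilde{\sD}(\mu,\nu) \leq \sP(\mu,\nu)
\]
coming from weak duality between two explicit witnesses: on the primal side, the coupling $\rho^*$ produced by Theorem~\ref{thm:couplingGen}, and on the dual side, the pair $(\psi,\delta) = (\psi_{p,q},\delta_{p,q})$ from Definition~\ref{def:thetaalphadeltapsi} with $(p(x),q(x)) = (P(F_{\bar{\eta}}(x)), Q(F_{\bar{\eta}}(x)))$. The non-negativity in Theorem~\ref{thm:Lgeq0} places this pair in $\sS$, and Theorem~\ref{thm:couplingGen} places $\rho^*$ in $\sM(\mu,\nu)$, so both quantities sandwich $\sP(\mu,\nu)$ and $\sD(\mu,\nu)$ respectively. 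The crux is to show that these two witnesses achieve the \emph{same} value, at which point all inequalities collapse.

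First, I would verify admissibility: the coupling from Theorem~\ref{thm:couplingGen} is a martingale coupling, since conditionally on $X = F_{\bar{\eta}}^{-1}(u)$ the pair of conditional probabilities on $\{P(u),Q(u)\}$ is $\bigl(\tfrac{Q(u)-F_{\bar{\eta}}^{-1}(u)}{Q(u)-P(u)}, \tfrac{F_{\bar{\eta}}^{-1}(u)-P(u)}{Q(u)-P(u)}\bigr)$, whose mean equals $F_{\bar{\eta}}^{-1}(u) = X$, while the $Z=1$ branch is trivially a martingale step. Admissibility of $(\psi,\delta)$ in $\sS$ is exactly the content of Theorem~\ref{thm:Lgeq0}. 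This gives
\[
\int \psi\,d\nu - \int \psi\,d\mu \;\leq\; \sD(\mu,\nu) \;\leq\; \tilde{\sD}(\mu,\nu) \;\leq\; \sP(\mu,\nu) \;\leq\; \int |y-x|\,\rho^*(dx,dy).
\]

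Next I would exploit the key complementary slackness feature: the support of $\rho^*$ lies inside the zero set of $L^{\psi,\delta}$. Indeed, on the $Z=1$ branch $Y=X$ gives $L^{\psi,\delta}(X,X)=0$ trivially, while on $Z=0$ we have $X \in E$ and $Y \in \{p(X),q(X)\}$, so Theorem~\ref{thm:Lgeq0} (together with Corollary~\ref{cor:Lgeq0} to handle the points where $p,q$ have jumps) gives $L^{\psi,\delta}(X,Y)=0$ almost surely. Hence
\[
|Y-X| \;=\; \psi(Y) - \psi(X) + \delta(X)(Y-X) \qquad \rho^*\text{-a.s.}
\]
Taking expectations under $\rho^*$ and using $\E[Y-X \mid X]=0$ together with the martingale property to kill the $\delta(X)(Y-X)$ term yields
\[
\int |y-x|\,\rho^*(dx,dy) \;=\; \int \psi\,d\nu - \int \psi\,d\mu,
\]
which closes the chain and forces equality throughout.

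The main obstacle is the integrability bookkeeping needed to legitimately take expectations and conclude $\E[\delta(X)(Y-X)] = 0$. Since $\mu \leq \nu$ in convex order both measures have finite means, so $\E|Y-X|<\infty$; but $\delta=\theta$ on $E$ is only guaranteed to be bounded if $E$ is a compact interval bounded away from the places where $q-p$ may degenerate, and $\psi$ may grow near the extremes of $\mathrm{supp}(\nu)$. To handle this cleanly I would either truncate $X$ to a compact sub-interval of $E$, apply the identity there, and pass to the limit using monotone/dominated convergence with $|Y-X|$ as envelope, or equivalently use the conditional representation: conditional on $X$ the random variable $Y-X$ is bounded by $\max(X-p(X),q(X)-X)$ and has conditional mean zero, so $\E[\delta(X)(Y-X)\mid X]=0$ pointwise, and integrating out requires only that $\E[\delta(X)(Y-X)]$ be well-defined, which follows from the identity $|Y-X| + \psi(X) - \psi(Y) = \delta(X)(Y-X)$ after verifying $\mu$-integrability of $\psi$ on $[a,b]$ and $\nu$-integrability of $\psi$ on $\R$ from the explicit formulas in Definition~\ref{def:thetaalphadeltapsi}.
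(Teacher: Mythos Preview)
Your proposal is correct and follows essentially the same approach as the paper: exhibit the coupling of Theorem~\ref{thm:couplingGen} and the pair $(\psi_{p,q},\delta_{p,q})$, use Theorem~\ref{thm:Lgeq0} and Corollary~\ref{cor:Lgeq0} to get $L^{\psi,\delta}=0$ on the support of $\rho^*$, and close the weak-duality chain. Your discussion of the integrability of $\delta(X)(Y-X)$ and of $\psi$ is more careful than the paper's own proof, which simply asserts $\E[|Y-X|] = \int \psi\,d\nu - \int \psi\,d\mu$ without comment.
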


\begin{proof}
By Theorems~\ref{thm:coupling} and \ref{thm:couplingGen},  
we have a candidate coupling. Moreover, by 
Theorem~\ref{thm:Lgeq0} and
Corollary~\ref{cor:Lgeq0}, it follows that $L^{\psi,\delta}=0$ wherever 
the candidate coupling 
places mass, and hence, for this choice of joint law for $(X,Y)$ and 
$(\psi,\delta) = (\psi_{p,q},\delta_{p,q})$ we have
$\E[|Y-X|] = \int \psi(y) \nu(dy) - \int \psi(x) \mu(dx)$.
Then $\sP(\mu,\nu) \leq \E[|Y-X|] = \int \psi(y) \nu(dy) - \int \psi(x) 
\mu(dx) \leq \sD(\mu,\nu)$ and by weak duality there is equality throughout.
\end{proof}

\section{Examples}

\subsection{The case with densities}
If $\mu$ and $\nu$ have densities then it is not necessary to introduce $P$ 
and $Q$ and instead we can use the functions $p(x)=P(F_\eta(x))$ and 
$q(x)=Q(F_\eta(x))$. Then from (\ref{eqn:zetadef}) and 
(\ref{eqn:characterisationPQ}) we find that the pair
$(p,q)$ are characterised by 
\begin{eqnarray}
D'(x)&=&D'(q(x))+D'(p(x)), \label{eq:D1} \\
D(x)-xD'(x)&=&D(q(x))-q(x)D'(q(x))+D(p(x))-p(x)D'(p(x)) \label{eq:D2},
\end{eqnarray}
for $x \in E$.

\begin{example} \label{ex:uniform1}
Recall Example~\ref{ex:uniformdiff} and suppose $\mu \sim U[-1,1]$ and $\nu 
\sim U[-2,2]$.
\begin{equation*} \label{eq:VPo}
D(x)= \left\{\begin{array}{ll}
\frac{1}{8}x^2 + \frac{1}{2}x +\frac{1}{2} &\; -2 \leq x \leq -1,  \\
\frac{1}{4} -\frac{1}{8} x^2  &\; -1 < x < 1, \\
\frac{1}{8}x^2 - \frac{1}{2}x +\frac{1}{2}  &\; 1 \leq x \leq 2.
\end{array}\right.
\end{equation*}
By (\ref{eq:D1}), for $-1<x<1$ we have $-x=q(x)+p(x)$. Then, by (\ref{eq:D2}) 
for $x \in 
[-1,1]$ 
we have
\begin{equation*}
D(x)-xD'(x) = D(q(x))-q(x)D'(q(x))+D(-x-q(x))+(x+q(x))D'(-x-q(x)) .
\end{equation*}
After some simplification we find $q(x)^2+q(x)x+x^2-3=0$, whence 
$q(x)=\frac{-x+\sqrt{12-3x^2}}{2}$, and 
$p(x)=\frac{-x-\sqrt{12-3x^2}}{2}$.
\end{example}

\begin{example} \label{ex:moduniform2}
Recall Example~\ref{ex:moduniform} and suppose that $\mu \sim U[-1,1]$ 
and
$\nu \sim \frac{5}{8}U[-2,-1] + \frac{3}{8}U[1,4]$. Then, 
\[ D(x) = \left\{ \begin{array}{ll} 
           \frac{5}{16}(x+2)^2   & -2 \leq x \leq -1; \\
           \frac{11}{16} + \frac{1}{8}x - \frac{1}{4}x^2 \ \   
                      & -1 < x \leq 1; \\   
           \frac{1}{16}(4-x)^2   & 1 < x \leq 4,
          \end{array} \right.
\] with $D \equiv 0$ outside $[-2,4]$.

Then (\ref{eq:D1}) and (\ref{eq:D2}) yield 
\[ 4x + 5p + q +5 = 0,  \hspace{10mm} 4x^2 + 5p^2 + q^2 = 25. \]
Eliminating $q$ and solving for $p$ gives the first expression in 
(\ref{ex:moduniformpq}), from which the second follows.
\end{example}

\subsection{The general case with atoms} 
The full power of the analysis 
is illustrated in the case where $\mu$ has atoms or $\nu$ has intervals 
with no mass. Then there is no hope of constructing the optimal 
martingales via an approach using differential equations.

\begin{example}
Let $\mu$ be any measure supported on $E=[-1,1]$ and suppose $\nu$ has support 
$E^c$ and density $\nu(dx) = |x|^{-3} dx$ for $|x| > 1$. Then for 
$|y|>1$, 
$D(y) = \frac{1}{2|y|}$, $D'(y) = - \frac{1}{2y|y|}$ and 
$D(y)-yD'(y)=\frac{1}{|y|}$. Further, $\gamma_a=\frac{1}{2}$. 

Since $\mu$ is not yet specified we do not have an expression for $D$ on 
$E$. Nonetheless, for $\theta \in (- 
\frac{1}{2}, \frac{1}{2})$ we can define the conjugate $D_c(\theta) = \sup_{-1 
< x < 1} (D(x) - x \theta)$, so that $D_c(\frac{1}{2} - u) = D(F_\mu^{-1}(u)) 
- F_\mu^{-1}(u)(\frac{1}{2} - u)$. 

Then $P=P(u)$ and $Q=Q(u)$ solve
\[ \phi(u) = D'(P) = \sE'_u(Q) = \frac{\sE_u(Q) - D(P)}{Q-P} \]
and hence
\[ \frac{1}{2P^2} = \left( \frac{1}{2} - u \right) + \frac{1}{2Q^2} =
\frac{D_c(\frac{1}{2} - u) + (\frac{1}{2} - u)Q - D(Q) - D(P)}{Q-P}. \]
Then, from the first equality  
\begin{equation}
\label{eqn:PQdefEG} 
\frac{1}{P^2} - \frac{1}{Q^2} = 1- 2u 
\end{equation}
and 
\[ Q \left( \frac{1}{2} - u  + \frac{1}{2Q^2} \right) - P \frac{1}{2P^2} =
D_c\left( \frac{1}{2} - u \right) + \left( \frac{1}{2} - u \right) Q - 
\frac{1}{2Q} + 
\frac{1}{2P}, 
\]
which simplifies to
\begin{equation}
\label{eqn:QPdefEG}
 \frac{1}{|P|} + \frac{1}{Q} =  D_c \left( \frac{1}{2} - u \right) .
\end{equation}
It is straightforward to solve (\ref{eqn:PQdefEG}) and (\ref{eqn:QPdefEG}) 
for $1/P$ and $1/Q$ and 
we find
\[ P(u) = - \frac{2 D_c(\frac{1}{2} - u)}{D_c(\frac{1}{2} - u)^2 + 2 
(\frac{1}{2} - u)} \hspace{10mm}
Q(u) =  \frac{2 D_c(\frac{1}{2} - u)}{D_c(\frac{1}{2} - u)^2 - 2 
(\frac{1}{2} - u)} . \]
For example, suppose $\mu \sim \frac{1}{2} \delta_{-1/2} + \frac{1}{2} 
U[0,1]$. Then, for $x \in (-1,- \frac{1}{2}]$ $D(x)= 1 + x/2$; for $x 
\in (- 
\frac{1}{2}, 0)$, $D(x)= 3/4$; and for $x \in [0,1]$, $D(x)= (3-x^2)/4$. 
Further, for $0 < u \leq 1/2$,
\[ P(u) = - \frac{(2-u)}{2 - 3u + u^2/4} \hspace{10mm}
Q(u) = \frac{(2-u)}{u + u^2/4}; \]
whereas, for $1/2 < u < 1$,
\[ P(u) = - \frac{2(1-u+u^2)}{2 - 4u + 3u^2-2u^3 +u^4} \hspace{10mm}
Q(u) = \frac{2(1-u+u^2)}{3u^2 -2u^3 + u^4}. \]
\end{example}

\begin{example}
Suppose $\mu \sim U(-1,1)$ and $\nu \sim U \{ -2, -1, 3 \}$.
Then $D$ is zero outside $(-2,3)$ and 
$D(x) = \frac{x+2}{3}$ for $-2<x \leq -1$,
$D(x) = \frac{3}{4}  + \frac{x}{6} - \frac{x^2}{4}$ for $-1<x \leq 1$ and
$D(x) = 1- \frac{x}{3}$ for $1 <x < 3$.

It is clear that $q(x)=3$ and $p(x)=-1$ for $x<x^*$ and $p(x)=-2$ for 
$x>x^*$,
where $x^* \in (-1,1)$ is to be found.

Then, see Figure~\ref{fig:3}, 
$x^*$ is the solution to
\[ \left. \left\{ D(x) + (y-x)D'(x) - D(y) \right\} \right|_{y=3} = 
\left. \left\{ D(-1) + (y+1)D'(-1-) \right\} \right|_{y=3} = \frac{5}{3}, \]
and we find $x^* = 3 - 4 \sqrt{\frac{2}{3}}$.


\begin{figure}[ht!]\label{fig:3}
\begin{center}
\includegraphics[height=6cm,width=10cm]{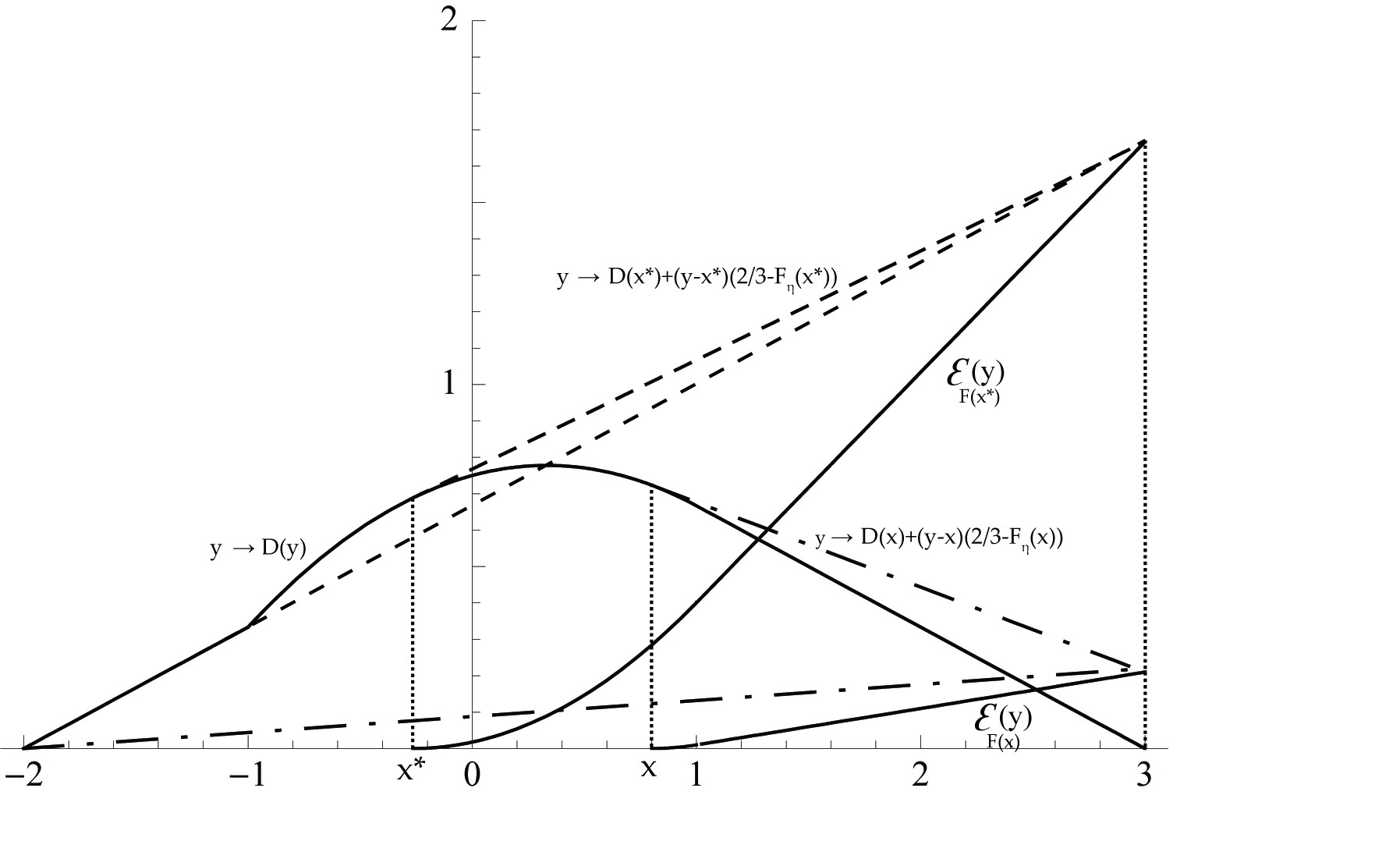}
\end{center}
\caption{The point $x^*$ is such that the tangent to $D(\cdot)$ at $x^*$
meets the line $y = (x+2)/3$ (ie the linear extension of $y
\mapsto D(y)$ taken over $(-2,-1)$) at $x$-coordinate 3. 
For $x>x^*$ the line joining $(-2,D(-2)=0)$ to $(3, 
\sE_{F(x)}(3))$ lies below $D$ on $(-\infty,-1)$ and above $\sE_{F(x)}$ on 
$(1,\infty)$. Hence, for $(x>x^*)$ we have $(p(x)=-2,q(x)=3)$. For $x<x^*$ 
(not shown) 
the line joining $(-1,D(-1)=1/3)$ to $(3,
\sE_{F(x)}(3))$ lies below $D$ on $(-\infty,-1)$ and above $\sE_{F(x)}$ on
$(1,\infty)$. Hence, for $(x>x^*)$ we have $(p(x)=-1,q(x)=3)$.}
\end{figure}

\end{example}

\section{Concluding remarks}

\subsection{Upper bound}
We show below that in the special case where i) $(\mu \wedge \nu)(\R) = 0$, 
ii) $\mu$ has support 
on 
an interval $E$ and iii) $\nu$ has support in $E^c$, a modification of the 
above methods yields the martingale coupling which {\em maximises} 
$\E[|Y-X|]$. Moreover, the optimal transport plan and the super-hedging 
strategies can also be given explicitly as for the lower bound case. 
Recall that Hobson and Neuberger~\cite{HobsonNeuberger:2008} give an 
existence proof, and show that there is no duality gap, but do not 
give a general method for deriving explicit forms for the 
optimal coupling. 

If $\mu$ is a point mass at $x$, then the problem is trivial and $\E[|Y - 
x|] = \int |y-x| \nu(dy)$. Otherwise, proceeding as in 
Section~\ref{ssec:intuition}, but noting that now we expect $L \leq 0$, 
$\beta(x) \geq -\alpha(x)$ and, conditional on $X=x$, mass at two points 
only, we can find equations for monotonic increasing functions (now 
labelled $g$ and $h$, instead of $p$ and $q$). Then $\alpha$ is convex at 
points $y=g(x)\leq x$ and $y=h(x)\geq x$. We find also that $g$ and $h$ 
are increasing. This general structure is evident in 
\cite{HobsonNeuberger:2008}.

Now we add the following assumption:
\begin{sdassumption} \label{ass:disjointU} The marginal distributions
$\mu$ and $\nu$ are such that $(\mu \wedge \nu)(\R) = 0$,
the support of $\mu$ is contained in an interval $E$ and the support of
$\nu$ is contained in $E^c$.
\end{sdassumption}

As before, suppose $E$ has endpoints $\{ a,b \}$. Then necessarily $g:E 
\mapsto (-\infty,a]$ and $h:E \mapsto 
[b,\infty)$. In the case with densities we can write down equations 
similar to 
(\ref{eq:marginalconstraint}) and (\ref{eq:martingaleconstraint}):
\begin{eqnarray} \label{eq:marginalconstraint2}
\int_a^z f_\mu(u) du & = &
\int_{-\infty}^{g(z)} f_\nu(u) du + \int_{b}^{h(z)} f_\nu(u) du \\
\label{eq:martingaleconstraint2}
\int_a^z u f_\mu(u) du & = &
\int_{-\infty}^{g(z)} u f_\nu(u) du + \int_{b}^{h(z)} u f_\nu(u) du. 
\end{eqnarray}

The defining equations for $g$ and $h$ become
\( C_\mu'( x) = C_\nu'( f(x) ) + C_\nu'( g(x) ) - C_\nu'( b ) \)
and
\[ C_\mu(x) - x C_\mu'(x) = C_\nu( f(x) ) - f(x) C'_\nu( f(x) )
   + C_\nu( g(x) ) - g(x) C'_\nu( g(x) ) - [ C_\nu( b ) - b C'_\nu( b )] . 
\] 
This time the equations do not simplify into expressions involving $D(x) 
= C_\nu(x)- C_\mu(x)$ alone.

By analogy with the lower bound case, and now allowing for measures 
which are not absolutely continuous, 
introduce for $u \in (0,1)$ and $y\geq b$,
\[ \sF_u(y) = C_{\mu}(F^{-1}_\mu(u)) + (y- F^{-1}_\mu(u))(u-1)
  + C_\nu(b) + (y-b) C'_\nu(b) - C_\nu(y)    \] 
and define
\[ \varphi(u) = \sup_{g \leq a < b \leq h} \frac{ \sF_u(h) - C_\nu(g)}{h 
- g}. \] 
Then the suprema is attained at $G(u) \leq a < 
b \leq H(u)$ and
\[ \varphi(u) = \frac{ \sF_u(H(u)) - C_\nu(G(u))}{H(u)-G(u)} = C_\nu'(G(u)) = 
\sF_u'(H(u)) \]
assuming these last two derivatives exist.

\begin{theorem}
\label{thm:upperbound} Suppose Assumption~\ref{ass:disjointU} is in force.

Let $U$ and $V$ be independent uniform random variables.
Let
$X = F_{\mu}^{-1}(U)$ and $Y =  G(U) I_{\tilde{A}_{(U,V)}} + 
H(U)I_{\tilde{A}^C_{(U,V)}}$
where
\[ \tilde{A}_{(u,v)} = \left\{ v \leq \frac{H(u)
-F_{\mu}^{-1}(u)}{(H(u)-G(u))}
\right\}. \]
Then $X$ has law $\mu$ and
$Y$ has law $\nu$.

Moreover $\E[|Y-X|] =\sup_{\rho \in \sM(\mu,\nu)} \int \int 
|y-x|\rho(dx, dy)$.
\end{theorem}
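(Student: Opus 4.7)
The strategy is to mirror the lower-bound proofs of Theorems~\ref{thm:coupling} and~\ref{thm:strongduality}, with the roles of $(p,q)$ replaced by $(G,H)$ (now increasing rather than decreasing) and the sign of the Lagrangian reversed. First I would check that the proposed coupling lies in $\sM(\mu,\nu)$. That $X \sim \mu$ is immediate from $X = F_\mu^{-1}(U)$. The martingale property follows by a direct calculation: conditional on $U=u$, the random variable $Y$ puts weight $(H(u)-F_\mu^{-1}(u))/(H(u)-G(u))$ on $G(u)$ and the complementary weight on $H(u)$, giving conditional mean $F_\mu^{-1}(u)=X$.

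For $Y \sim \nu$, I would develop the upper-bound analog of the $\phi,\zeta$ calculus of Section~\ref{ssec:casemuwedgenu=0}. At regular points $\varphi(u)=C_\nu'(G(u))=\sF_u'(H(u))$, which on subtraction gives the key identity $C_\nu'(G(u))+C_\nu'(H(u))=(u-1)+C_\nu'(b)$. Introduce a companion slope $\tilde\varphi$ (the analog of $\zeta$) with $\varphi+\tilde\varphi=u-1+C_\nu'(b)$; by the argument that bounded $\phi(\hat u)-\phi(u)$ in the lower-bound case (using convexity of $C_\nu$ in place of concavity of $D$ on $E$), both $\varphi$ and $\tilde\varphi$ are absolutely continuous, with $\varphi'(u)+\tilde\varphi'(u)=1$ and $G(u)\varphi'(u)+H(u)\tilde\varphi'(u)=F_\mu^{-1}(u)$ Lebesgue almost everywhere. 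Solving gives
\[
\varphi'(u)=\frac{H(u)-F_\mu^{-1}(u)}{H(u)-G(u)} \quad \text{a.e.}
\]
Writing $G_L^{-1}(y)=\sup\{u:G(u)\le y\}$ and using the monotonicity of $G$, for $y\le a$ we then obtain
\[
\Prob(Y\le y)=\int_0^{G_L^{-1}(y)}\!\frac{H(u)-F_\mu^{-1}(u)}{H(u)-G(u)}\,du=\varphi(G_L^{-1}(y))+1=F_\nu(y),
\]
after identifying the boundary value $\varphi(0+)=-1$ via $G(0+)=\inf\supp\nu$. The symmetric identity on $[b,\infty)$ is handled through $\tilde\varphi$.

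For the optimality assertion, I would construct a semi-static super-hedge $(\psi_{g,h},\delta_{g,h})$ by mirroring Definition~\ref{def:thetaalphadeltapsi} with the appropriate sign changes, and prove that the Lagrangian
\[
L^{\psi,\delta}(x,y)=|y-x|+\psi(x)-\psi(y)+\delta(x)(y-x)
\]
is now \emph{non-positive} on $\R^2$, with equality on the support of the candidate coupling. The case split from the proof of Theorem~\ref{thm:Lgeq0} carries over verbatim, with the integrals flipping sign because the analog of ``$q(z)\ge b\ge y$'' is now ``$H(z)\ge b\ge y$'' appearing inside an inequality of opposite orientation. From the equality on the support one deduces $\E[|Y-X|]=\int\psi_{g,h}\,d\nu-\int\psi_{g,h}\,d\mu$, and the weak super-duality inequality (any super-hedge $(\psi,\delta)$ with $L^{\psi,\delta}\le 0$ yields $\E_\rho[|Y-X|]\le \int\psi\,d\nu-\int\psi\,d\mu$ for every $\rho\in\sM(\mu,\nu)$) then pins the constructed coupling as the maximiser.

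The main obstacle is the step establishing $\varphi'(u)=(H(u)-F_\mu^{-1}(u))/(H(u)-G(u))$ a.e.\ without regularity of $\mu$ or $\nu$: the lower-bound argument packaged everything into the single function $D=C_\nu-C_\mu$ and used a clean conjugate-duality lemma, whereas here the absence of such a combined potential forces one to work with $C_\mu$ and $C_\nu$ separately and to keep the integration-by-parts bookkeeping for the two subdifferential slopes consistent. The verification of the sign-reversed Lagrangian inequality also requires care because the geometry of Figure~\ref{fig:1} inverts: $\alpha$ is convex at the contact points $g(x)$ and $h(x)$ rather than concave, so the global comparison with tangent lines must be rederived.
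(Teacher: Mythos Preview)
Your proposal is correct and follows exactly the route the paper itself indicates: the paper's proof of Theorem~\ref{thm:upperbound} is a two-sentence pointer saying that the first part follows the proof of Theorem~\ref{thm:coupling} and the optimality follows from analogues of Theorem~\ref{thm:Lgeq0}, Corollary~\ref{cor:Lgeq0} and Theorem~\ref{thm:strongduality}. You have simply written out in detail what the paper leaves implicit --- the $\varphi,\tilde\varphi$ calculus replacing $\phi,\zeta$, the derivation of $\varphi'(u)=(H(u)-F_\mu^{-1}(u))/(H(u)-G(u))$ a.e., the identification $\varphi(u)=C_\nu'(G(u))=F_\nu(G(u))-1$ giving the marginal law of $Y$, and the sign-reversed Lagrangian inequality for the super-hedge --- and your remarks on the obstacles (working with $C_\mu,C_\nu$ separately rather than through $D$, and the convexity rather than concavity of $\alpha$ at contact points) correctly flag where the bookkeeping differs from the lower-bound case.
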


\begin{proof}
The proof of the first part of the theorem follows the proof of 
Theorem~\ref{thm:coupling}. The fact that this coupling yields the maximal 
value of
$\int \int |y-x|\rho(dx, dy)$ follows from analogues of 
Theorem~\ref{thm:Lgeq0}, 
Corollary~\ref{cor:Lgeq0} and Theorem~\ref{thm:strongduality}.
\end{proof}

\begin{example}
Suppose $\mu \sim U[-1,1]$ and $\nu\sim \frac{1}{2} U[-3,-1] + 
\frac{1}{2} U[1,3]$. Then $C_{\mu}(x) = (1-x)^2/4$ for $x \in (-1,1)$ 
and $C_{\mu}(x) = x^- = (-x) \vee 0$ otherwise. Further, 
$C_\nu(x) = \frac{(x+3)^2}{8} - x$ for $-3 < x \leq -1$,
$C_\nu(x) = 1-\frac{x}{2}$ for $-1 < x \leq 1$,
$C_\nu(x) = \frac{(3-x)^2}{8}$ for $1 < x \leq 3$ and
$C_{\mu}(x) = x^-$ otherwise.

In this case $g(x) = G(F_\mu(x))$ and $h(x) = H(F_\mu(x))$ are well defined. 
Define
$\tilde{\sF}_x(y) = \sF_{F_\mu(x)}(y)$. Then
\( \tilde{\sF}_x(y) = \frac{1}{8} \left( 1 - 2x^2 - 2y + 4xy - y^2 
\right) \)
and it is easy to check that the pair $(g(x)=x-2,h(x)=x+2)$ solve
\[ C_{\nu}'(g(x)) = \tilde{\sF}'_x (h(x))  = \frac{\tilde{\sF}_x (h(x)) 
- C_\nu(g(x))}{h(x)-g(x)} \]
Hence $\sup_{\rho \in \sM(\mu,\nu)} \int \int |y-x| \rho(dx,dy)$ is 
attained by the coupling $Y = X + \Theta$ where $\Theta$ is uniform on 
$\{ \pm 2 \}$ and is independent of $X$. 

For this example there is a simple proof of optimality. Jensen's inequality 
gives $\E[(Y-X)^2] \geq (\E[|Y-X|])^2$, and the left hand side is independent 
of the martingale coupling and equal to $\int y^2 \nu(dy) - \int x^2 \mu(dx)$.
Hence for our example, $\E[|Y-X|] \leq 2$ for any martingale coupling, and
$\E[|Y-X|]$ is 
easily seen to be equal to 2 for the coupling of the previous paragraph.
\end{example}

\begin{remark}
Unlike in the lower bound case, it is not the case that if $(\mu \wedge 
\nu)(\R)>0$ then the optimal construction is to embed the common part of 
the distribution by setting $Y=X$ and then to embed the remaining part 
by the style of construction given in Theorem~\ref{thm:upperbound}. This 
explains why we need the strengthened dispersion assumption.
\end{remark}
 
\subsection{The situation when the 
Dispersion Assumption does not hold}
Return to the lower bound and
consider the case where Assumption~\ref{ass:disjoint} fails. For simplicity we 
assume that $\mu$ and $\nu$ have densities. Then we expect there to be 
functions $p(x) < x < q(x)$ such that conditional on $X=x$, $Y \in 
\{p(x),x,q(x) \}$. 

From the analysis of Section~\ref{ssec:de} we expect that if $x'<x''$ then 
$p(x'') \notin (p(x'),x')$ and $q(x') \notin (x'',q(x''))$. 
This condition is 
necessary for optimality, but not sufficient and there may be several 
candidate martingale couplings characterised by pairs $(p,q)$ with this 
property. A further condition is necessary, akin to the `global 
consistency 
condition' of \cite{HobsonNeuberger:2008}.

\begin{figure}[ht!]\label{fig:4}
\begin{center}
\includegraphics[height=7cm,width=7cm]{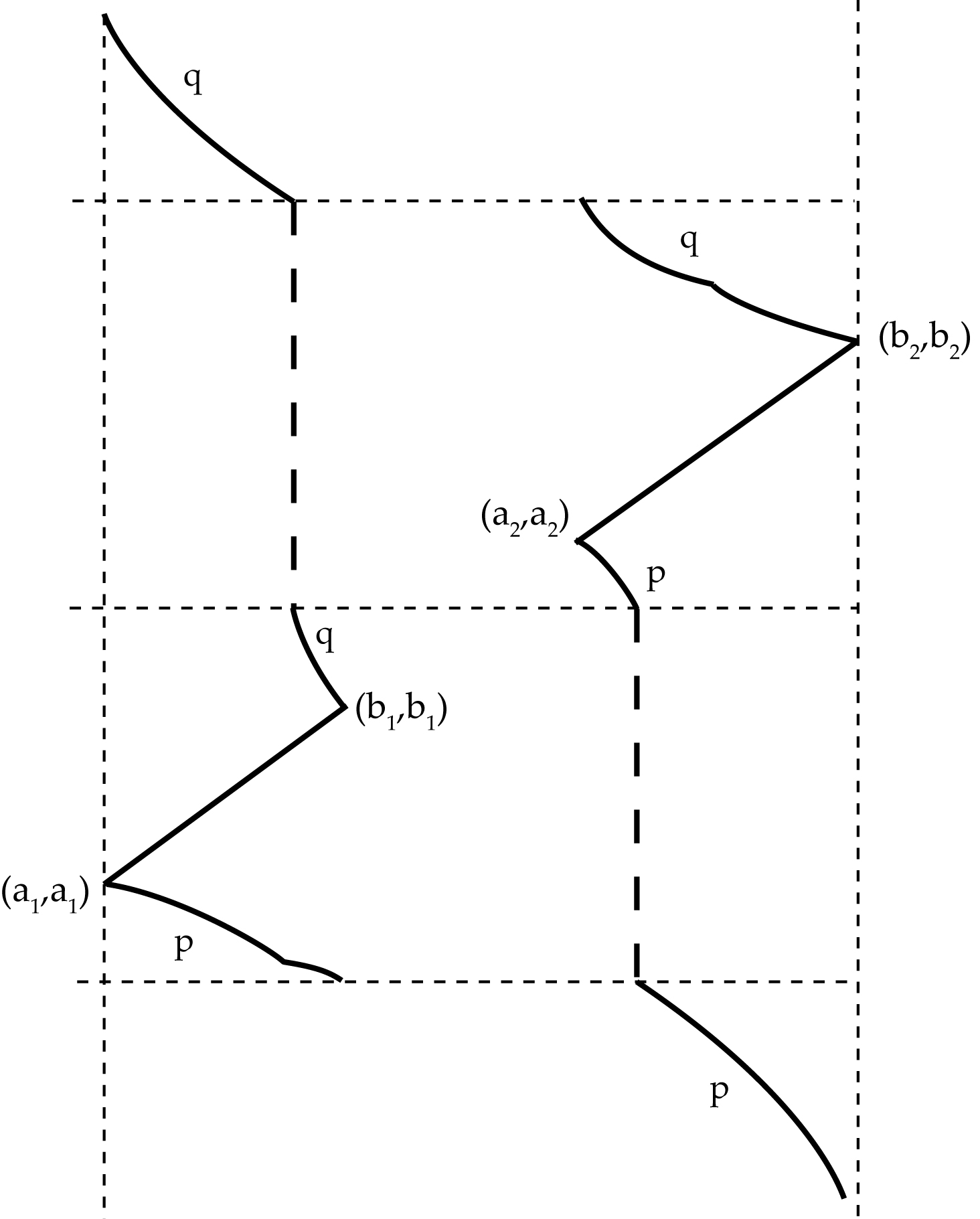}
\end{center}
\caption{A viable pair $(p,q)$, in the relatively simple case where the set 
with $\mu - \nu>0$ consists of two intervals.}
\end{figure}

\begin{example}
Suppose $\mu \sim U[-1,1]$ and $\nu \sim \frac{1}{4} U[-2,-1] + \frac{1}{2} 
\delta_0 + \frac{1}{4} U[1,2]$ where $\delta_0$ denotes a point mass at zero.
Suppose the mass on $[-1,0]$ is mapped to $[c, -1] \cup \{0 \} \cup 
[d, 2]$ where $-2< c<-1$ and $1<d<2$. Then mass and mean considerations imply
$\frac{1}{2} = \frac{(-1-c)}{4} + \theta + \frac{(2-d)}{4}$ and $ 
\frac{-1}{2}\frac{1}{2} = \frac{(-1-c)}{4} \frac{(c-1)}{2} + \frac{(2-d)}{4} 
\frac{(d+2)}{2}$ where $\theta$ is the mass mapped from $[-1,0]$ to  
zero, so $0 \leq \theta \leq 
\frac{1}{2}$. These equations simplify to $c+d = 4 \theta-1$ and $c^2 + d^2 = 
7$. Then
$ d = \frac{(4 \theta - 1)}{2} + \frac{\sqrt{14 - (4 \theta - 1)^2}}{2}$ and 
$c =\frac{(4 \theta - 1)}{2} - \frac{\sqrt{14 - (4 \theta - 1)^2}}{2}$. 
In 
particular 
there is a parametric family of martingale couplings for which $p$ and $q$ 
have the appropriate monotonicity properties.

In this example we expect by symmetry that the optimal solution has $\theta = 
\frac{1}{4}$, and this can indeed be shown to be the case. In general further 
arguments are required.
\end{example}

\subsection{Multi-timestep version}
The analysis of the lower bound extends in a straightforward fashion to 
multiple time-points, provided an analogue of Assumption~\ref{ass:disjoint} 
holds.

Let $T_0=0<T_1< \ldots T_n$ be an increasing sequence of times and suppose 
the marginal distributions $\mu_i$ of a martingale $M$ are known at each 
time $T_i$. 
Consider the problem of giving a lower bound on $S = \E[ \sum_{j=1}^n |M_{T_j} 
- M_{T_{j-1}}|]$.

For the problem to be well-posed we must have that the family $(\mu_i)$ is 
increasing in 
convex order. Let 
$\eta_i = (\mu_{i-1} - \mu_i)^+$ and $\gamma_i = (\mu_i - \mu_{i-1})^+$. 
Suppose further that $\eta_i$ has support contained in an interval $E_i$ and 
$\gamma_i$ has support contained in $E_i^c$. Then the methods of the paper 
apply, and we get a tight lower bound on $S$ of the form $S \geq \sum_{i=1}^n 
\sP(\mu_{i-1},\mu_i)$. 

\subsection{Model independent bounds and mass transport}
There have recently been several papers,
including Beiglb{\"o}ck et
al~\cite{BeiglboeckHLP:2011}, Beiglb{\"o}ck and Juillet
\cite{BeiglboeckJuillet:2012} and Henry-Labord\`{e}re and
Touzi~\cite{HLTouzi:2013} which have made connections between the problem 
of finding optimal model-independent hedges and the Monge-Kantorovich mass 
transport problem, and between the support of the extremal model and 
Brenier's principle.

Beiglb{\"o}ck and Juillet
\cite{BeiglboeckJuillet:2012} introduce a left-curtain martingale 
transport plan. This coupling has many desirable features, and using the 
methods of Henry-Labord\`{e}re and Touzi~\cite{HLTouzi:2013} it is 
possible to calculate the form of the coupling in several examples.
The 
motivations of Beiglb{\"o}ck and Juillet~\cite{BeiglboeckJuillet:2012} in 
introducing 
the left-curtain coupling are at least threefold:
firstly it is relatively tractable; secondly they use it 
to develop an analogue of Brenier's 
principle; and thirdly this coupling 
minimises $\E[c(X,Y)]$ for any cost functional 
$c(x,y)=h(y-x)$ where the third derivative of $h$ is positive. (Note that this 
manifestly excludes the case $h(z)=|z|$ which is the topic of our study.) This 
optimality 
result was extended by Henry-Labord\`{e}re and Touzi~\cite{HLTouzi:2013}
to include all 
functions $c$ which satisfy the Spence-Mirrlees type 
condition 
$c_{xyy}>0$. (One 
example which is natural in finance is the payoff $c(x,y) = - (\log y - \log 
x)^2$ which arises in variance swap payoffs, see also 
\cite{HobsonKlimmek:2011}.)

In addition to the clear financial motivation in terms of forward starting 
straddles, the payoff $|y-x|$ is a natural 
object of study since it is the original Monge cost function in the classical 
(non-martingale) setting, see R\"{u}schendorf~\cite{Ruschendorf:2007}. In that 
setting the non-differentiability makes this payoff a relatively difficult 
functional to study.

The approach in \cite{BeiglboeckJuillet:2012} is based on a notion of cyclic 
monotonicity in a martingale setting. This has the advantage of providing 
existence results in a general setting but is typically not amenable to 
explicit solutions. In contrast, the Lagrangian approach employed in this 
article leads to tractable characterisations of the optimal coupling. A 
further contribution of this study is that unlike both 
\cite{BeiglboeckJuillet:2012} and \cite{HLTouzi:2013} who assume that $\mu$ 
(at least) is atom-free, we indicate how to deal with atoms in $\mu$. We find 
a geometric representation for the optimal coupling, which can then be applied 
to arbitrary distributions. It is likely that these ideas can be applied more 
generally.

\bibliography{biblio}

\end{document}